\newcommand{\mockalph}[1]{}
\newtheorem{theorem}{Theorem}
\newtheorem{lemma}{Lemma}
\newtheorem{proposition}{Proposition}
\newcommand{\bSigma}{ {\boldsymbol \Sigma} }
\newcommand{\btheta}{ {\boldsymbol \theta} }
\newcommand{\bTheta}{ {\boldsymbol \Theta} }
\newcommand{\bmu}{ {\boldsymbol \mu} }
\newcommand{\boeta}{ {\boldsymbol \eta} }
\newcommand{\bbeta}{ {\boldsymbol \beta} }
\newcommand{\blambda}{ {\boldsymbol \lambda} }
\newcommand{\bx}{ {\bf x} }
\newcommand{\bX}{ {\bf X} }
\newcommand{\bZ}{ {\bf Z} }
\newcommand{\bJ}{ \bm{J} }
\newcommand{\bLambda}{ {\boldsymbol \Lambda} }
\newcommand{\bGamma}{ {\boldsymbol \Gamma} }
\newcommand{\by}{ {\bf y} }
\newcommand{\bH}{ {\bf H} }
\newcommand{\bz}{ {\bf z} }
\newcommand{\bxi}{ {\boldsymbol \xi} }
\newcommand{\bpi}{ {\boldsymbol \pi} }
\def\T{\intercal}
\begin{document}

\begin{frontmatter}

\title{Conditionally conjugate mean--field variational Bayes for logistic models}
\runtitle{Conditionally conjugate mean--field variational Bayes for logistic models}

\begin{aug}
\author{\fnms{Daniele} \snm{Durante}\ead[label=e1]{daniele.durante@unibocconi.it}}
\and
\author{\fnms{Tommaso} \snm{Rigon}
\ead[label=e2]{tommaso.rigon@phd.unibocconi.it}}

\runauthor{Durante and Rigon}


\thankstext{}{Department of Decision Sciences and Bocconi Institute for Data Science and Analytics, Bocconi University, Via Roentgen 1, Milan, Italy \printead{e1,e2}.}
\end{aug}

\begin{abstract}
Variational Bayes (\textsc{vb}) is a common strategy for approximate Bayesian inference, but simple methods are only available for specific classes of models including, in particular,  representations having conditionally conjugate constructions within an exponential family. Models with logit components are an apparently notable exception to this class, due to the absence of conjugacy between the logistic likelihood and the Gaussian priors for the coefficients in the linear predictor. To facilitate approximate inference within this widely used class of models, \citet{jak_2000} proposed a simple variational approach which relies on a family of tangent quadratic lower bounds of  logistic log-likelihoods, thus restoring conjugacy between these approximate bounds and the Gaussian priors. This strategy is still  implemented successfully, but less attempts have been made to formally understand the reasons underlying its excellent performance. To cover this key gap, we provide a formal connection between the above bound and a recent P\'olya-gamma data augmentation for logistic regression. Such a result places the computational methods associated with the aforementioned bounds within the framework of variational inference for conditionally conjugate exponential family models, thereby allowing recent advances for this class to be inherited also by the methods relying on   \citet{jak_2000}.
\end{abstract}


\begin{keyword}
\kwd{\textsc{em} algorithm}
\kwd{Logistic regression}
\kwd{P\'olya-gamma data augmentation}
\kwd{Quadratic approximation}
\kwd{Variational Bayes}
\end{keyword}

\end{frontmatter}

\section{Introduction}
\label{sec1}
The increasing availability of massive and high--dimensional datasets has motivated a wide interest in strategies for Bayesian learning of posterior distributions, beyond classical \textsc{mcmc} methods \citep[e.g.][]{gelfand_1990}. Indeed, sampling algorithms can face severe computational bottlenecks in complex statistical models, thus motivating alternative solutions based on scalable and efficient optimization of approximate posterior distributions. Notable methods within this class are the Laplace approximation \citep[e.g.][Ch. 4.4]{bishop_2006}, variational Bayes  \citep[e.g.][Ch. 10.1]{bishop_2006} and expectation propagation  \citep[e.g.][Ch. 10.7]{bishop_2006}, with variational inference providing a standard choice in several fields, as discussed in  recent reviews by  \citet{blei_2017} and \citet{ormerod2010}. Refer also to \citet{jordan_1999} for a seminal introduction of variational inference from a statistical perspective.

Adapting the notation in  \citet{blei_2017}, \textsc{vb} aims at obtaining a tractable approximation $q^*(\btheta)$ for the posterior distribution $p(\btheta \mid \by)$ of the  random coefficients $\btheta=(\theta_1, \ldots, \theta_m)^{\intercal}$, in the model having joint density $p(\by, \btheta)=p(\by \mid \btheta)p(\btheta)$ for $\btheta$ and the observed data $\by=(y_1, \ldots, y_n)^{\intercal}$, with $p(\btheta)$ denoting the prior distribution for $\btheta$. This optimization problem is formally addressed by minimizing  the Kullback--Leibler  (\textsc{kl}) divergence \citep{kullback_1951}
\begin{eqnarray}
\textsc{kl}[q(\btheta)\mid\mid p(\btheta \mid \by)]= \int_{\bTheta}q(\btheta)\log \frac{q(\btheta)}{p(\btheta \mid \by)} \mbox{d}\btheta= \int_{\bTheta}q(\btheta)\log \frac{q(\btheta)p(\by)}{p(\by,\btheta)} \mbox{d}\btheta,
\label{eq1}
\end{eqnarray}
with respect to $q(\btheta) \in \mathcal{Q}$, where $\mathcal{Q}$ denotes a tractable, yet sufficiently flexible, class of approximating distributions. As is clear from \eqref{eq1}, the calculation of the $\textsc{kl}$ divergence between $q(\btheta)$ and the posterior $p(\btheta \mid \by)$ requires the evaluation of the normalizing constant $p(\by)$, whose intractability is actually the main reason motivating approximate Bayesian methods. Due to this, the above minimization problem is commonly translated into the maximization of the evidence lower bound (\textsc{elbo}) function
\begin{eqnarray}
\textsc{elbo}[q(\btheta)]= \int_{\bTheta}q(\btheta)\log \frac{p(\by,\btheta)}{q(\btheta)} \mbox{d}\btheta=-\textsc{kl}[q(\btheta)\mid\mid p(\btheta \mid \by)]+ \log p(\by),
\label{eq2}
\end{eqnarray}
which does not require the evaluation of $p(\by)$. In fact, since $\log p(\by)$ does not depend on $\btheta$, maximizing \eqref{eq2} is equivalent to minimizing \eqref{eq1}. Re--writing \eqref{eq2} as $ \log p(\by)=\textsc{elbo}[q(\btheta)]+\textsc{kl}[q(\btheta)\mid\mid p(\btheta \mid \by)]$ it can be additionally noticed that the \textsc{elbo} provides a lower bound of $ \log p(\by)$ for any $q(\btheta)$, since the Kullback--Leibler divergence is always non--negative \citep{kullback_1951}.

The above set--up defines the general rationale underlying \textsc{vb} but, as is clear from \eqref{eq2}, the practical feasibility of the variational optimization requires a tractable form for the joint density $p(\by,\btheta)$ along with a simple, yet flexible, variational family $\mathcal{Q}$. This is the case of mean--field  \textsc{vb} for conditionally conjugate exponential family models with global and local variables \citep{wang2004, bishop_2006, hoffman_2013, blei_2017}. Recalling \citet{hoffman_2013}, these methods focus on obtaining a mean--field approximation 
\begin{eqnarray}
\begin{split}
q^*(\btheta)=q^*(\bbeta, \bz)=q^*(\bbeta) \prod\nolimits_{i=1}^nq^*(z_i)=\mbox{argmin}\{\textsc{kl}[q(\bbeta) \prod\nolimits_{i=1}^nq(z_i) \mid \mid p(\bbeta, \bz \mid \by)]\},&\\
=\mbox{argmax}\{\textsc{elbo}[q(\bbeta) \prod\nolimits_{i=1}^nq(z_i)]\}\qquad \qquad  \ &
\label{eq31}
\end{split}
\end{eqnarray}
for the posterior distribution $p(\bbeta, \bz \mid \by)$ of the global coefficients $\bbeta=(\beta_1, \ldots, \beta_p)^{\intercal}$ and the local variables $\bz=(z_1, \ldots, z_n)^{\intercal}$ in the statistical model having joint density
\begin{eqnarray}
p(\by,\bbeta, \bz)=p(\bbeta) \prod\nolimits_{i=1}^np(z_i \mid \bbeta)p(y_i \mid z_i, \bbeta)=p(\bbeta) \prod\nolimits_{i=1}^np(y_i, z_i \mid \bbeta),
\label{eq3}
\end{eqnarray}
with $p(y_i, z_i \mid \bbeta)$ from an exponential family and $p(\bbeta)$ being a conjugate prior for this density. The latent quantities $\bz$---when present---typically denote random effects or unit--specific augmented data within some hierarchical formulation, such as in mixture models. 

Although the above assumptions appear restrictive, the factorization of $q(\bbeta,\bz)$---characterizing the mean--field variational family---provides a flexible class in several applications and allows direct implementation of simple coordinate ascent variational inference (\textsc{cavi}) routines \citep[Ch. 10.1.1]{bishop_2006} which sequentially maximize the \textsc{elbo} in \eqref{eq31}  with respect to each factor in $q(\bbeta, \bz)=q(\bbeta) \prod_{i=1}^nq(z_i)$---fixing the others at their most recent update. Instead, the exponential family and conjugacy assumptions further simply calculations by providing approximating densities $q^*(\bbeta)$ and $q^*(z_i)$, $i=1, \ldots, n$ from tractable classes of random variables. These advantages have also motivated recent computational improvements \citep{hoffman_2013} and theoretical studies \citep{wang2004}.  We refer to \citet{hoffman_2013} and \citet{blei_2017} for details on the methods related to the general formulation in \eqref{eq31}--\eqref{eq3}, and focus here on models having logistic likelihoods as building--blocks. Indeed, although the conjugacy and exponential family assumptions are common to a variety of machine learning representations \citep[e.g.][]{blei_2003,airoldi_2008, hoffman_2013}, classical Bayesian logistic regression models of the form
\begin{eqnarray}
p(y_i \mid \bbeta)=\frac{[\exp(\bx_i^{\intercal} \bbeta)]^{y_i}}{1+\exp(\bx_i^{\intercal} \bbeta)}, \ \ y_i \in \{0,1\}, \ i=1, \ldots, n, \quad  \mbox{with } \bbeta \sim \mbox{N}_p(\bmu_0, \bSigma_0),
\label{eq4}
\end{eqnarray}
do not enjoy direct conjugacy between the likelihood for the binary response data and the Gaussian prior for the coefficients in the linear predictor \citep[e.g.][]{wang_2013}. This apparently notable exception to conditionally conjugate exponential family models also holds, as a direct consequence, for a wide set of formulations which incorporate Bayesian logistic regressions at some layer of the hierarchical specification. Some relevant examples are classification via Gaussian processes \citep{rasm_2006},  supervised nonparametric clustering \citep{Ren2011a} and hierarchical mixture of experts \citep{Bishop2003}.

\begin{algorithm}[t]
\caption{\textsc{em} algorithm for approximate Bayesian inference by \citet{jak_2000}.} \label{al1}
 Initialize $\xi_1^{(0)}, \ldots, \xi^{(0)}_n$.\\
 \For( ){{\normalfont{$t=1$ until convergence of \eqref{eq6}}}}
 {
\enspace   \textsc{\bf E--step}. Update $q^{(t)}(\bbeta)=\bar{p}^{(t-1)}(\bbeta \mid \by)\propto p(\bbeta)\prod_{i=1}^{n}\bar{p}^{(t-1)}(y_i \mid \bbeta)$ to obtain a $\mbox{N}_p(\bmu^{(t)}{,} \bSigma^{(t)})$ density with $$\bSigma^{(t)}=(\bSigma_0^{-1}+\bX^{\intercal} \bar{\bZ}^{(t-1)} \bX)^{-1}, \qquad \bmu^{(t)}=\bSigma^{(t)}[\bX^{\intercal}(\by-0.5{\cdot}{\bf 1}_n)+\bSigma_0^{-1}\bmu_0 ],$$
where ${\bf 1}_n=(1, \ldots, 1)^{\intercal}$ and $ \bar{\bZ}^{(t-1)}=\mbox{diag}\{0.5[\xi_1^{(t-1)}]^{-1}\mbox{tanh}(0.5 \xi^{(t-1)}_1), \ldots,  0.5[\xi_n^{(t-1)}]^{-1}\mbox{tanh}(0.5 \xi^{(t-1)}_n)\}$. Note that the quadratic form of \eqref{eq5}, restores conjugacy between the Gaussian prior for $\bbeta$ and the approximated likelihood. To clarify this result, note that, for every $\xi_i$, $\bar{p}(y_i \mid \bbeta)$ is proportional to the kernel of a normal with mean $\bx_i^{\intercal}\bbeta$ and variance $2\xi_i\mbox{\textup{tanh}}(0.5\xi_i)^{-1}$ for transformed data $2\xi_i\mbox{\textup{tanh}}(0.5\xi_i)^{-1}(y_i-0.5)$.\\
\enspace   \textsc{\bf M--step}. Compute $\bxi^{(t)}=\mbox{argmax}_{\bxi}\int_{\Re^{p}}{q}^{(t)}(\bbeta)\log \bar{p}(\by,\bbeta)\mbox{d}\bbeta$ to obtain the solutions $$\xi^{(t)}_i=\{\mathbb{E}_{{q}^{(t)}(\bbeta)}[(\bx_i^{\intercal} \bbeta)^2 ]\}^{\frac{1}{2}}=[\bx^{\intercal}_i\bSigma^{(t)}\bx_i{+}(\bx_i^{\intercal}\bmu^{(t)})^2]^{\frac{1}{2}}, \quad \mbox{for every } i=1, \ldots, n.$$
Note that $\int_{\Re^{p}}{q}^{(t)}(\bbeta)\log \bar{p}(\by,\bbeta)\mbox{d}\bbeta=\mbox{const}+\sum_{i=1}^n \int_{\Re^{p}}{q}^{(t)}(\bbeta)\log \bar{p}(y_i\mid\bbeta)\mbox{d}\bbeta$. Hence, it is possible to maximize the expected log-likelihood associated with every $y_i$ separately, as a function of each $\xi_i$, for $i=1, \ldots, n$. This result leads to the above solution.}
{\bf Output at the end of the algorithm:} $\bxi^*$ and, as a byproduct, the approximate posterior ${q}^*(\bbeta)=\bar{p}^{*}(\bbeta \mid \by)$.
\end{algorithm}

To allow tractable \textsc{vb} for non--conjugate models, several alternatives beyond conjugate mean--field \textsc{vb} have been proposed \citep[see e.g.][]{jak_2000, braun_2010, wand_2011, wang_2013}. Within the context of logistic regression, \citet{jak_2000} developed a  seminal \textsc{vb} algorithm which relies on the quadratic lower bound 
\begin{eqnarray}
\qquad \log \bar{p}(y_i \mid \bbeta) =(y_i-0.5)\bx_i^{\T}\bbeta-0.5\xi_i-0.25\xi_i^{-1}\mbox{\textup{tanh}}(0.5\xi_i)[(\bx_i^{\T}\bbeta)^2- \xi_i^2]-\log[1+\exp(-\xi_i)], 
\label{eq5}
\end{eqnarray}
for the log-likelihood $\log p(y_i \mid \bbeta)=y_i(\bx_i^{\T}\bbeta)-\log[1+\exp(\bx_i^{\T}\bbeta)] \geq \log \bar{p}(y_i \mid \bbeta)$ of every   $y_i$ from a logistic regression. In \eqref{eq5}, the vector $\bx_i=(x_{i1}, \ldots, x_{ip})^{\T}$ comprises the covariates measured for unit $i$, whereas $\bbeta=(\beta_1, \ldots, \beta_p)^{\T}$ are the associated coefficients. The vector $\bxi=(\xi_1, \ldots, \xi_n)^{\intercal}$ denotes instead unit--specific variational parameters defining the location where $\log \bar{p}(y_i \mid \bbeta)$ is tangent to $\log p(y_i \mid \bbeta)$. In fact, $\log \bar{p}(y_i \mid \bbeta)=\log p(y_i \mid \bbeta)$ when $\xi_i^2=(\bx_i^{\T}\bbeta)^2$. Leveraging equation \eqref{eq5},  \citet{jak_2000}  proposed an expectation--maximization (\textsc{em}) algorithm \citep{demp_1977} to approximate $p(\bbeta \mid \by)$. At the generic iteration $t$, this routine alternates between an \textsc{e}--step in which the conditional distribution  of the random coefficients $\bbeta$ given the current  $\bxi^{(t-1)}$ is updated to obtain ${q}^{(t)}(\bbeta)$, and an \textsc{m}--step which calculates the expectation of the augmented approximate log-likelihood $\log \bar{p}(\by,\bbeta)=\log p(\bbeta)+\sum_{i=1}^n \log \bar{p}(y_i \mid \bbeta)$ with respect to ${q}^{(t)}(\bbeta)$ and maximizes it as a function of $\bxi$. Recalling the general presentation of  \textsc{em} by \citet[][Ch. 9.4]{bishop_2006} and Appendices A--B in \citet{jak_2000}, this strategy ultimately maximizes $\log \bar{p}(\by)=\log \int_{\Re^p}p(\bbeta)\prod_{i=1}^n \bar{p}(y_i \mid \bbeta)\mbox{d} \bbeta$ with respect to $\bxi$, by sequentially optimizing the lower bound 
\begin{eqnarray}
\int_{\Re^p} q(\bbeta)\log \frac{p(\bbeta) \prod_{i=1}^{n} \bar{p}(y_i\mid \bbeta)}{q(\bbeta)} \mbox{d} \bbeta,
\label{eq6}
\end{eqnarray}
as a function of the unknown distribution $q(\bbeta)$ and the fixed parameters $\bxi$, where   $p(\bbeta)$ is the density of the Gaussian prior for $\bbeta$. Hence, as is clear from Algorithm  \ref{al1}, this \textsc{em} produces an optimal estimate $\bxi^{*}$ of $\bxi$ and, as a byproduct, also a distribution ${q}^*(\bbeta)$, which is regarded as an approximate posterior in \citet{jak_2000}. Indeed, recalling the \textsc{em} structure, ${q}^*(\bbeta)$ coincides with the conditional distribution $\bar{p}^{*}(\bbeta \mid \by)$ obtained by updating the prior $p(\bbeta)$ with the approximate likelihood $\prod_{i=1}^n \bar{p}^*(y_i \mid \bbeta)$ induced by \eqref{eq5} and evaluated at the optimal variational parameters $\xi^*_1, \ldots, \xi^*_n$. However, although being successfully implemented  in the machine learning and statistical literature \citep[e.g.][]{Bishop2003, rasm_2006, Lee2010, Ren2011a, Carbonetto2012, Tang2015, Wand2017},  it is not clear how the solution ${q}^*(\bbeta)$ relates to the formal \textsc{vb} set--up in \eqref{eq1}--\eqref{eq2}.  Indeed, $\bar{p}^{*}(\bbeta \mid \by)$ is not the posterior induced by a Bayesian logistic regression. This is due to the fact that each $p(y_i \mid \bbeta)$ in the kernel of $p(\bbeta \mid \by)$ is replaced with the approximate likelihood $\bar{p}^*(y_i \mid \bbeta)$ evaluated at the optimal variational parameters maximizing $\log \bar{p}(\by)$. This last result, which is inherent to the \textsc{em} \citep[][]{demp_1977}, suggests an heuristic intuition  for why ${q}^*(\bbeta)$ may still provide a reasonable approximation. Indeed, since $\log \bar{p}(y_i \mid \bbeta)\leq \log p(y_i \mid \bbeta)$ for every $\xi_i$ and $i=1, \ldots, n$, the same holds for $\log \bar{p}(\by)$ and $\log p(\by)$. Thus, since $\log p(\by)$ does not vary with $\bxi$, maximizing $\log \bar{p}(\by)$ with respect to $\bxi$ is expected to provide the tightest approximation of each $\log p(y_i \mid \bbeta)$ via the lower bound in \eqref{eq5} evaluated at the optimum $\xi^*_i$, for $i=1, \ldots, n$, thereby guaranteeing similar predictive densities $p(\by)$ and $\bar{p}^*(\by)$. Hence, in correspondence to $\bxi^{*}$, the minimization of $\textsc{kl}[q(\bbeta) \mid \mid \bar{p}^{*}(\bbeta \mid \by) ]$ in the \textsc{e}--step, would hopefully provide  a solution ${q}^{*}(\bbeta)=\bar{p}^*(\bbeta \mid \by)$ close to the true posterior $p(\bbeta \mid y)$.

Although the above discussion provides an intuition for the excellent performance of the methods proposed by  \citet{jak_2000}, it shall be noticed that finding the tightest bound within a class of functions might not be sufficient if this class is not  flexible enough. Indeed, the quadratic form of \eqref{eq5} might be restrictive for logistic log-likelihoods, and hence even the optimal approximation may fail to mimic $\log p(y_i \mid \bbeta)$. Moreover, according to \eqref{eq1}, a formal \textsc{vb} set--up requires the minimization of a well--defined $\textsc{kl}$ divergence between an exact posterior and an approximating density from a given variational family. Instead, \citet{jak_2000} seem to minimize the divergence between an approximate posterior and a pre--specified density. If this were the case, then their methods could be only regarded as approximate solutions to formal \textsc{vb}. Indeed, although \eqref{eq5} has been recently studied  \citep[][]{de_2009,brow_2015}, this is currently the main view of the \textsc{em} in Algorithm \ref{al1}  \citep[e.g.][]{blei_2017, wang_2013, bishop_2006}. 

In Section~\ref{sec2} we prove that this is not true and that \eqref{eq5}, although apparently supported  by purely mathematical arguments, has indeed a clear probabilistic interpretation related to a recent P\'olya-gamma data augmentation  for logistic regression \citep{pol_2013}. In particular, let $q(z_i)$ be the density of a P\'olya-gamma $\textsc{pg}(1, \xi_i)$, then \eqref{eq5} is a proper evidence lower bound associated with a \textsc{vb} approximation of the posterior for $z_i$ in the conditional model $p(y_i, z_i \mid \bbeta)$ for data $y_i$ from \eqref{eq4} and the P\'olya-gamma variable $(z_i \mid \bbeta) \sim \textsc{pg}(1, \bx_i^{\intercal} \bbeta)$, with $\bbeta$ kept fixed. Combining this result with the objective function in equation \eqref{eq6}, allows us to formalize Algorithm \ref{al1} as a pure \textsc{cavi} which approximates the joint posterior of $\bbeta$ and the augmented P\'olya-gamma data $z_1, \ldots, z_n$, under a mean--field variational approximation within a conditionally conjugate exponential family framework. These results are discussed in Section \ref{sec3}, and are further generalized to allow stochastic variational inference  \citep{hoffman_2013} in logistic models, thus covering an important computational gap. A final discussion can be found in Section \ref{sec4}. Codes and additional empirical assessments are available at \url{https://github.com/tommasorigon/logisticVB}. Although we focus on Bayesian inference, it shall be noticed that \eqref{eq5} motivates also an \textsc{em} for maximum likelihood estimation of $\bbeta$ \citep[][Appendix C]{jak_2000}.  We derive the optimality properties of this routine in  the Appendix.

\section{Conditionally conjugate variational representation}
\label{sec2}
This section discusses the theoretical connection between equation \eqref{eq5} and a recent P\'olya-gamma data augmentation for conditionally conjugate inference in Bayesian logistic regression \citep{pol_2013}, thus allowing us to recast the methods proposed by  \citet{jak_2000} within the wider framework of mean--field variational inference for conditionally conjugate exponential family models.  We shall emphasize that, in a recent manuscript, \citet{scott_2013} proposed an \textsc{em} for maximum a posteriori estimation of $\bbeta$ in \eqref{eq4}, discussing connections with the variational methods in \citet{jak_2000}. Their findings are however limited to computational differences and similarities among the two methods and the associated algorithms. We instead provide a fully probabilistic connection between the contribution by  \citet{jak_2000} and the one of  \citet{pol_2013}, thus opening new avenues for advances in \textsc{vb} for logistic models.

To anticipate Lemma \ref{lemma1}, note that the core contribution of \citet{pol_2013} is in showing that $p(y_i \mid \bbeta)$ in model \eqref{eq4} can be expressed as a scale--mixture of Gaussians with respect to a P\'olya-gamma density. This result facilitates the implementation of \textsc{mcmc} methods which update  $\bbeta$ and the P\'olya-gamma augmented data $\bz=(z_1, \ldots, z_n)^{\intercal}$ from conjugate full conditionals. In fact, the joint density $p(\by,\bz\mid  \bbeta)$ has a Gaussian kernel in $\bbeta$, thus restoring Gaussian--Gaussian conjugacy in the full conditional. As discussed in  Lemma \ref{lemma1}, this data augmentation, although developed a decade later, was implicitly hidden in the bound of  \citet{jak_2000}.
\begin{lemma}
\label{lemma1}
Let $\log \bar{p}(y_i \mid \bbeta)$ be the quadratic lower bound in \eqref{eq5} proposed by \citet{jak_2000} for the logistic log-likelihood $\log p(y_i \mid \bbeta)$  in \eqref{eq4}. Then, for every unit $i=1, \ldots, n$, we have 
\begin{eqnarray}
\log \bar{p}(y_i \mid \bbeta)=\int_{\Re_+} q(z_i) \log \frac{p(y_i, z_i \mid \bbeta)}{q(z_i)} \mbox{d}z_i=\mathbb{E}_{q(z_i)}[\log p(y_i, z_i \mid \bbeta)]-\mathbb{E}_{q(z_i)}[\log q(z_i)], 
\label{eq7}
\end{eqnarray}
with $p(y_i, z_i \mid \bbeta)=p(y_i \mid \bbeta)p(z_i \mid \bbeta)$ and $p(y_i \mid \bbeta)=\exp(y_i\bx_i^{\intercal} \bbeta)[1+\exp(\bx_i^{\intercal} \bbeta)]^{-1}$, whereas $q(z_i)$ and $p(z_i \mid \bbeta)$ are the densities of the P\'olya-gamma  variables 
 $\textsc{pg}(1,\xi_i)$ and $\textsc{pg}(1,\bx_i^{\T}\bbeta)$, respectively.
\end{lemma}
\begin{proof}[Proof]
To prove Lemma~\ref{lemma1}, first notice that $0.5\xi_i+\log[1+\exp(-\xi_i)]=\log[2\mbox{cosh}(0.{5}\xi_i)]$ and $0.{5}(\bx_i^{\T}\bbeta)=\log[1+\exp(\bx_i^{\T}\bbeta)]-\log[2\mbox{cosh}\{0.5(\bx_i^{\T}\bbeta)\}]$. Replacing such quantities in   \eqref{eq5}, we obtain
\begin{eqnarray*}
y_i \bx_i^{\intercal}\bbeta-\log[1+\exp(\bx_i^{\T}\bbeta)]-0.25\xi_i^{-1}\mbox{\textup{tanh}}(0.5\xi_i)[(\bx_i^{\T}\bbeta)^2- \xi_i^2]+\log[\mbox{cosh}(0.{5}\xi_i)^{-1}\mbox{cosh}\{0.5(\bx_i^{\T}\bbeta)\}].
\end{eqnarray*}
To highlight equation \eqref{eq7} in the above function, note that, recalling \citet{pol_2013}, the quantity $- 0.25\xi_i^{-1}\mbox{\textup{tanh}}(0.5\xi_i)[(\bx_i^{\T}\bbeta)^2- \xi_i^2]$ is equal to $\mathbb{E}[-0.5z_{i}(\bx_i^{\T}\bbeta)^2]-\mathbb{E}(-0.5z_{i}\xi_i^2)$, where the expectation is taken with respect to $z_i \sim \textsc{pg}(1,\xi_i)$. Hence, $\log \bar{p}(y_i \mid \bbeta)$ can be expressed as
\begin{eqnarray*}
\begin{split}
\int_{\Re_+}\frac{\exp(-0.5z_i\xi_i^2)p(z_i)}{\mbox{cosh}(0.{5}\xi_i)^{-1}}  \log \frac{\exp(y_i \bx_i^{\intercal}\bbeta)[1+\exp(\bx_i^{\T}\bbeta)]^{-1}\exp[-0.5z_i(\bx_i^{\T}\bbeta)^2]\mbox{cosh}[0.{5}(\bx_i^{\T}\bbeta)]p(z_i)}{\exp(-0.5z_i\xi_i^2)\mbox{cosh}(0.{5}\xi_i)p(z_i)} \mbox{d}z_i.
 \end{split}
\end{eqnarray*}
Based on the above expression, the proof is concluded after noticing that $\exp(y_i \bx_i^{\intercal}\bbeta)[1+\exp(\bx_i^{\T}\bbeta)]^{-1}=p(y_i \mid \bbeta)$, whereas $\exp[-0.5z_i(\bx_i^{\T}\bbeta)^2]\mbox{cosh}[0.{5}(\bx_i^{\T}\bbeta)]p(z_i)$ and $\exp(-0.5z_i\xi_i^2)\mbox{cosh}(0.{5}\xi_i)p(z_i)$ are the densities  $p(z_i \mid \bbeta)$ and $q(z_i)$ of the  P\'olya-gamma random variables $\textsc{pg}(1,\bx_i^{\T}\bbeta)$ and $\textsc{pg}(1,\xi_i)$, respectively, with $p(z_i)$ the density of a $\textsc{pg}(1,0)$.
\end{proof}

According to Lemma \ref{lemma1}, the expansion in equation \eqref{eq5} is a proper \textsc{elbo} related to a \textsc{vb} approximation of the posterior for $z_i$ in the conditional model $p(y_i, z_i \mid \bbeta)$ for response data $y_i$ from \eqref{eq4} and the local variable $(z_i \mid \bbeta) \sim \textsc{pg}(1, \bx_i^{\intercal} \bbeta)$, with $\bbeta$ kept fixed. Note that, although some intuition on the relation between $\log \bar{p}(y_i \mid \bbeta)$ and $\mathbb{E}_{q(z_i)}[\log p(y_i, z_i \mid \bbeta)]$ can be deduced from  \citet{scott_2013}, the authors leave out additive constants not depending on $\bbeta$ in $\log \bar{p}(y_i \mid \bbeta)$ when discussing this connection. Indeed, according to Lemma \ref{lemma1}, these quantities are crucial to formally interpret $\log \bar{p}(y_i \mid \bbeta)$ as a genuine \textsc{elbo}, since they coincide with $-\mathbb{E}_{q(z_i)}[\log q(z_i)]$. Besides this result,  Lemma~\ref{lemma1} provides a formal characterization for the approximation error  $\log p(y_i \mid \bbeta) - \log \bar{p}(y_i \mid \bbeta)$. Indeed, adapting  \eqref{eq2} to this setting, such a quantity is   the  \textsc{kl} divergence between a generic P\'olya-gamma variable  and the one obtained by conditioning on $\bbeta$. This allows to complete $\log p(y_i \mid \bbeta) \geq \log \bar{p}(y_i \mid \bbeta)$, as
\begin{eqnarray}
\quad \log p(y_i \mid \bbeta)= \log \bar{p}(y_i \mid \bbeta)+\textsc{kl}[q(z_i) \mid \mid p(z_i \mid y_i, \bbeta)]=\log \bar{p}(y_i \mid \bbeta)+\textsc{kl}[q(z_i) \mid \mid p(z_i \mid \bbeta)],
\label{eq8}
\end{eqnarray}
where the last equality follows from the fact that $p(y_i,z_i {\mid} \bbeta)=p(y_i {\mid} \bbeta)p(z_i {\mid} \bbeta)$, and hence $p(z_i {\mid} y_i, \bbeta)=p(z_i {\mid} \bbeta)$. This result  sheds light on the heuristic interpretation of ${q}^{*}(\bbeta)$ in Section \ref{sec1}. Indeed, as is clear from  \eqref{eq8}, if $q(z_i)$ evaluated at the optimal $\xi^{*}_i$ is globally close to $p(z_i \mid \bbeta)$ for every $\bbeta$ and $i=1, \ldots, n$, then  \eqref{eq5} ensures accurate approximation of $\log p(y_i \mid \bbeta)$, thus providing approximate posteriors  ${q}^{*}(\bbeta)$  close to the target $p(\bbeta \mid \by)$. Exploiting Lemma \ref{lemma1}, Theorem \ref{teo1} formalizes this discussion by proving that the \textsc{em} in Algorithm \ref{al1} maximizes the \textsc{elbo} of a well--defined model under a  mean--field \textsc{vb}. 
\begin{theorem}
\label{teo1}
The lower bound in \eqref{eq6} maximized by \citet{jak_2000} in their \textsc{em} for approximate Bayesian inference in model \eqref{eq4} coincides with a genuine evidence lower bound 
\begin{eqnarray}
\begin{split}
\textsc{elbo}[q(\bbeta,\bz)]=&\int_{\Re^p}\int_{\Re_+^n}q(\bbeta, \bz) \log \frac{p(\by, \bbeta,\bz)}{q(\bbeta, \bz)} \mbox{\normalfont d}\bz \mbox{\normalfont  d} \bbeta,\\
=&\ \mathbb{E}_{q(\bbeta,\bz)}[\log p(\by, \bbeta,\bz)]-\mathbb{E}_{q(\bbeta,\bz)}[\log q(\bbeta,\bz)],
\end{split}
\label{eq9}
\end{eqnarray}
where $p(\by, \bbeta,\bz)=p(\bbeta)\prod_{i=1}^np(y_i \mid \bbeta)p(z_i \mid \bbeta)$ and $q(\bbeta, \bz)=q(\bbeta)\prod_{i=1}^n q(z_i)$, with $q(z_i)$ and $p(z_i \mid \bbeta)$ denoting the densities of  the P\'olya-gamma variables $\textsc{pg}(1,\xi_i)$ and  $\textsc{pg}(1,\bx_i^{\T}\bbeta)$, respectively. 
\end{theorem}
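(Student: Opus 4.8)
The plan is to start from the right--hand side, the genuine ELBO in \eqref{eq9}, and reduce it to the objective \eqref{eq6} by applying Lemma~\ref{lemma1} term by term. First I would substitute the assumed factorizations $p(\by, \bbeta,\bz)=p(\bbeta)\prod_{i=1}^np(y_i \mid \bbeta)p(z_i \mid \bbeta)$ and $q(\bbeta, \bz)=q(\bbeta)\prod_{i=1}^n q(z_i)$ into \eqref{eq9} and split the logarithm additively, isolating the global ratio $\log[p(\bbeta)/q(\bbeta)]$ from the $n$ unit--specific ratios $\log[p(y_i \mid \bbeta)p(z_i \mid \bbeta)/q(z_i)]$. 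Because the first ratio depends on $\bbeta$ alone and each $q(z_i)$ integrates to one, the expectation with respect to the $z_i$ factors integrates out, leaving $\mathbb{E}_{q(\bbeta)}\{\log[p(\bbeta)/q(\bbeta)]\}$.

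Next I would treat each of the $n$ summands. Exploiting the mean--field independence between $\bbeta$ and $z_1, \ldots, z_n$, the joint expectation can be written as an iterated expectation whose inner integral is taken over $z_i$ with $\bbeta$ held fixed. This inner integral is precisely $\int_{\Re_+} q(z_i) \log[p(y_i, z_i \mid \bbeta)/q(z_i)] \mbox{d}z_i$, which by Lemma~\ref{lemma1} equals $\log \bar{p}(y_i \mid \bbeta)$ pointwise in $\bbeta$. Crucially, the entropy contributions $-\mathbb{E}_{q(z_i)}[\log q(z_i)]$ must be retained here, since these are exactly the additive constants that turn Lemma~\ref{lemma1} into an identity rather than a bound. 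Taking the remaining expectation over $q(\bbeta)$ then yields $\mathbb{E}_{q(\bbeta)}[\log \bar{p}(y_i \mid \bbeta)]$ for each $i$.

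Recombining the global term with the $n$ converted summands gives $\mathbb{E}_{q(\bbeta)}\{\log[p(\bbeta)\prod_{i=1}^n \bar{p}(y_i \mid \bbeta)/q(\bbeta)]\}$, which is the integral in \eqref{eq6}, once the variational parameters $\xi_i$ of each $\textsc{pg}(1,\xi_i)$ factor $q(z_i)$ are identified with the tangent parameters entering $\bar{p}(y_i \mid \bbeta)$. This would complete the argument.

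I expect the only delicate point to be the bookkeeping around the iterated expectation and the interchange of integration order, together with the verification that $q(\bbeta)$ may be pulled outside the $z_i$--integral so that Lemma~\ref{lemma1} applies at each fixed $\bbeta$; once the two factorizations are in place this is routine, but it is precisely where the mean--field assumption does the real work. All of the conceptual content---that the seemingly ad hoc constants in \eqref{eq5} are genuine entropy terms---is already supplied by Lemma~\ref{lemma1}, so the theorem amounts to organizing that identity across the $n$ local blocks and the single global block $\bbeta$.
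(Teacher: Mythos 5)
Your proposal is correct and is essentially the paper's own proof read in reverse: the paper starts from \eqref{eq6}, substitutes the identity \eqref{eq7} from Lemma~\ref{lemma1} for each $\log \bar{p}(y_i \mid \bbeta)$, and reassembles the joint integral using $\int_{\Re_+} q(z_i)\,\mbox{d}z_i=1$ and the mean--field factorization, which are exactly the steps you carry out from \eqref{eq9} toward \eqref{eq6}. The points you flag as delicate---retaining the entropy terms and pulling $q(\bbeta)$ outside the $z_i$--integrals---are handled identically in the paper, so there is no substantive difference.
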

\begin{proof}[Proof]
The proof is a direct consequence of  Lemma~\ref{lemma1}. In particular, let $\int_{\Re^p} q(\bbeta)\log[ p(\bbeta)q(\bbeta)^{-1}] \mbox{d} \bbeta+ \int_{\Re^p} q(\bbeta)\sum_{i=1}^n \log \bar{p}(y_i \mid \bbeta)  \mbox{d} \bbeta$ denote an expanded representation of \eqref{eq6}. Then, replacing $\log \bar{p}(y_i \mid \bbeta)$ with its probabilistic definition in \eqref{eq7} and performing simple mathematical calculations, we obtain
\begin{eqnarray*}
\int_{\Re^p} q(\bbeta)\log \frac{p(\bbeta)}{q(\bbeta)} \mbox{d} \bbeta+\sum_{i=1}^n\int_{\Re^p}\int_{\Re_+}  q(\bbeta) q(z_i) \log \frac{p(y_i \mid \bbeta)p(z_i \mid \bbeta)}{q(z_i)} \mbox{d}z_i  \mbox{d} \bbeta.
\end{eqnarray*}
Note now that the first summand does not depend on $\bz$, thus allowing us to replace this integral  with $\int_{\Re^p}\int_{\Re_+^n}\log [p(\bbeta)q(\bbeta)^{-1}]q(\bbeta) \prod_{i=1}^n q(z_i)\mbox{d}\bz\mbox{d}\bbeta$. Similar arguments can be made to include $\prod_{i=1}^n q(z_i)$ in the second integral. Making these substitutions in the above equation we obtain
\begin{eqnarray*}
\begin{split}
&{\int_{\Re^p} \int_{\Re^n_+}} \left[\log \frac{p(\bbeta)}{q(\bbeta)}+\log \frac{\prod_{i=1}^np(y_i\mid  \bbeta)p(z_i\mid \bbeta)}{\prod_{i=1}^n q(z_i) } \right]q(\bbeta)\prod_{i=1}^n q(z_i)\mbox{\normalfont d}\bz \mbox{\normalfont  d} \bbeta \\
& {=}{\int_{\Re^p}\int_{\Re_+^n}}q(\bbeta, \bz) \log \frac{p(\bbeta) \prod_{i=1}^np(y_i \mid \bbeta)p(z_i\mid \bbeta)}{q(\bbeta, \bz)} \mbox{\normalfont d}\bz \mbox{\normalfont  d} \bbeta=\int_{\Re^p}\int_{\Re_+^n}q(\bbeta, \bz) \log \frac{p(\by, \bbeta,\bz)}{q(\bbeta, \bz)} \mbox{\normalfont d}\bz \mbox{\normalfont  d} \bbeta,
\end{split}
\end{eqnarray*}
thus proving Theorem~\ref{teo1}. Note that $q(\bbeta, \bz )=q(\bbeta)\prod_{i=1}^n q(z_i)$ and $\int_{\Re_+} q(z_i)\mbox{d}z_i=1$.
\end{proof}
As is clear from Theorem~\ref{teo1}, the variational strategy proposed by \citet{jak_2000} is a pure \textsc{vb}  minimizing $\textsc{kl}[q(\bbeta, \bz) \mid \mid p(\bbeta,\bz \mid \by)]$ under a mean--field variational family $\mathcal{Q}=\{ q(\bbeta, \bz): q(\bbeta, \bz)=q(\bbeta) \prod_{i=1}^nq(z_i) \}$ in the conditionally conjugate exponential family model with 
\begin{eqnarray}
\begin{split}
&\mbox{Global variables} \qquad \quad \qquad \bbeta \sim \mbox{N}_p(\bmu_0,\bSigma_0),\\
&\mbox{Local variables}  \qquad \ \ \ (z_i \mid \bbeta) \sim \textsc{pg}(1,\bx_i^{\intercal}\bbeta ), \quad i=1, \ldots, n, \\
& \mbox{Data} \qquad \qquad \qquad \quad \ (y_i \mid \bbeta) \sim \mbox{Bern}\{\exp(\bx_i^{\intercal}\bbeta)[1+\exp(\bx_i^{\intercal}\bbeta)]^{-1}\}, \quad i=1, \ldots, n.
\end{split}
\label{eq10}
\end{eqnarray}
We refer to \citet[][Sect. 2]{choi_2013} for this specific formulation of the P\'olya-gamma data augmentation scheme which highlights how, unlike the general specification in \eqref{eq3}, the conditional distribution of $y_i$ does not depend on $z_i$. As discussed in Section \ref{sec1}, this is not a necessary requirement. Indeed, what is important is that the joint likelihood $p(y_i, z_i \mid \bbeta)$ is within an exponential family and the prior $p(\bbeta)$ is conjugate to it.  Recalling  \citet[][Sect. 2]{choi_2013} and noticing that $\mbox{cosh}[0.{5}(\bx_i^{\T}\bbeta)]=0.5[1+\exp(\bx_i^{\T}\bbeta)]\exp[-0.{5}(\bx_i^{\T}\bbeta)]$, this is the case of \eqref{eq10}. In fact
\begin{eqnarray}
p(y_i, z_i \mid \bbeta)&=&p(y_i\mid \bbeta)p(z_i\mid \bbeta)=\exp(\bx_i^{\intercal} \bbeta)^{y_i}[1+\exp(\bx_i^{\intercal} \bbeta)]^{-1}\exp[-0.5z_i(\bx_i^{\T}\bbeta)^2]\mbox{cosh}[0.{5}(\bx_i^{\T}\bbeta)]p(z_i),\nonumber\\
&=&0.5\exp[(y_i-0.5)\bx_i^{\T}\bbeta-0.5z_i(\bx_i^{\T}\bbeta)^2][1+\exp(\bx_i^{\intercal} \bbeta)]^{-1}[1+\exp(\bx_i^{\intercal} \bbeta)]p(z_i),
\label{gaus_li}
\end{eqnarray}
is proportional to the Gaussian kernel $\exp[(y_i-0.5)\bx_i^{\T}\bbeta-0.5z_i(\bx_i^{\T}\bbeta)^2]$, which is conjugate to $p(\bbeta)$.

\section{\textsc{cavi} and \textsc{svi} for logistic models}
\label{sec3}
The results in Section \ref{sec2} recast the methods in \citet{jak_2000}  within a much broader framework motivating a formal \textsc{cavi} and generalizations to stochastic variational inference (\textsc{svi}). 

\subsection{Coordinate ascent variational inference (\textsc{CAVI})}
\label{seccavi}
As discussed in Section \ref{sec1}, the mean--field assumption allows the implementation of a simple  \textsc{cavi}  \citep[Ch. 10.1.1]{blei_2017,bishop_2006} which sequentially maximizes the evidence lower bound in \eqref{eq9}  with respect to each factor in $q(\bbeta) \prod_{i=1}^nq(z_i)$, via the following updates
\begin{eqnarray}
\begin{split}
q^{(t)}(\bbeta)  =& \ \exp\{\mathbb{E}_{q^{(t-1)}(\bz)}\log[ p(\bbeta\mid \by,\bz)]\}c_{\bbeta}(\by)^{-1}, \\
 q^{(t)}(z_i) =&\ \exp\{\mathbb{E}_{q^{(t)}(\bbeta)}\log[ p(z_i\mid \by, \bz_{-i},\bbeta)]\}c_{z_i}(\by)^{-1}, \quad i=1, \ldots, n,
\end{split}
\label{eq11}
\end{eqnarray}
 at iteration $t$---until convergence of the \textsc{elbo}. In the above expressions, $c_{\bbeta}(\by)$ and $c_{z_i}(\by)$, $i=1, \ldots, n$, denote constants leading to proper densities. Note that in our case $p(z_i\mid \by, \bz_{-i},\bbeta)=p(z_i\mid \by, \bbeta)$. 
 
To clarify why \eqref{eq11} provides a routine which iteratively improves the \textsc{elbo}, and ultimately maximizes it, note that, keeping fixed $q^{(t-1)}(z_1), \ldots, q^{(t-1)}(z_n)$, equation \eqref{eq9} can be re--written as
\begin{eqnarray}
\begin{split}
& \mathbb{E}_{q(\bbeta)}\left[\mathbb{E}_{q^{(t-1)}(\bz)}\log\frac{ p(\bbeta) \prod\nolimits_{i=1}^np(y_i\mid \bbeta)p(z_i\mid \bbeta)p(\by,\bz)}{q(\bbeta)p(\by,\bz)}\right]+\mbox{const} \\
&=  \mathbb{E}_{q(\bbeta)}\left[\mathbb{E}_{q^{(t-1)}(\bz)}\log\frac{ p(\bbeta \mid \by,\bz)}{q(\bbeta)}\right]+ \mathbb{E}_{q^{(t-1)}(\bz)}\log p(\by,\bz)+\mbox{const},\\
&= \mathbb{E}_{q(\bbeta)}\left[\log\frac{\exp\{\mathbb{E}_{q^{(t-1)}(\bz)}\log[ p(\bbeta\mid \by,\bz)]\}}{q(\bbeta)c_{\bbeta}(\by)}\right]+\log c_{\bbeta}(\by) + \mathbb{E}_{q^{(t-1)}(\bz)}\log p(\by,\bz)+\mbox{const},
\end{split}
\label{eq11b}
\end{eqnarray} 
where the first term  in the last equation is the only quantity which depends on $\bbeta$ and is equal to the negative $\textsc{kl}$ divergence between   $q(\bbeta)$ and $\exp\{\mathbb{E}_{q^{(t-1)}(\bz)}\log[ p(\bbeta\mid \by,\bz)]\}c_{\bbeta}(\by)^{-1}$, thus motivating the \textsc{cavi} update for $q(\bbeta)$. Similar derivations can be done to obtain the solutions for $q(z_1), \ldots, q(z_n)$ in \eqref{eq11}. As is clear from \eqref{eq11}, the \textsc{cavi} solution  identifies both the form of the approximating densities---without pre--specifying them as part of the mean--field assumption---and the optimal parameters of such densities. As discussed in Section \ref{sec1}, these solutions are particularly straightforward in conditionally conjugate exponential family representations \citep{hoffman_2013}, including model \eqref{eq10}. In fact,  recalling \citet{pol_2013}, the full conditionals for the local and global variables in model  \eqref{eq10} can be obtained via conditional conjugacy properties, which lead to
\begin{eqnarray}
\begin{split}
 (\bbeta\mid \by,\bz) &\sim \mbox{N}_{p}\{(\bSigma_0^{-1}+\bX^{\intercal} \bZ \bX)^{-1}[\bX^{\intercal}(\by-0.5{\cdot}{\bf 1}_n)+\bSigma_0^{-1}\bmu_0 ],(\bSigma_0^{-1}+\bX^{\intercal} \bZ \bX)^{-1}\},\\
\qquad (z_i \mid \by,\bz_{-i},\bbeta) &\sim \textsc{pg}(1, \bx_i^{\intercal}\bbeta), \qquad i=1, \ldots, n,
\end{split}
\label{eq12}
\end{eqnarray}
with $\bZ=\mbox{diag}(z_1, \ldots, z_n)$ and $\bX$ the $n \times p$ design matrix with rows $\bx_i^{\intercal}$, $i=1, \ldots, n$. Moreover, as is clear from \eqref{eq12}, both $(\bbeta\mid \by,\bz) $ and $(z_i \mid \by,\bz_{-i},\bbeta)$ have the exponential family representation 
\begin{eqnarray}
\begin{split}
p(\bbeta\mid \by,\bz) &=h(\bbeta)\exp[\boeta_1(\by)^{\intercal}\bbeta+\boeta_2(\bz)^{\intercal}\bbeta\bbeta^{\intercal}-\alpha\{\boeta_1(\by),\boeta_2(\bz) \} ],\\
p(z_i \mid \by,\bz_{-i},\bbeta) &=p(z_i \mid \by,\bbeta) =h(z_i)\exp[\eta_i(\bbeta)^{\intercal}z_i-\alpha\{\eta_i(\bbeta) \} ], \qquad i=1, \ldots, n,
\end{split}
\label{eq13}
\end{eqnarray}
with natural parameters $\boeta_1(\by)=\bX^{\intercal}(\by-0.5{\cdot}{\bf 1}_n){+}\bSigma_0^{-1}\bmu_0$, $\boeta_2(\bz) =-0.5(\bSigma_0^{-1}+\bX^{\intercal} \bZ \bX)$, and $\eta_i(\bbeta)=-0.5(\bx^{\intercal}_i\bbeta)^2$. Substituting these expressions in \eqref{eq11}, it can be immediately noticed that the \textsc{cavi} solutions  have the same density of the corresponding full--conditionals with optimal natural parameters $\blambda^{(t)}_1= \mathbb{E}_{q^{(t-1)}(\bz)}[\boeta_1(\by)]$, $\blambda^{(t)}_2=\mathbb{E}_{q^{(t-1)}(\bz)}[\boeta_2(\bz)]$ and $\phi^{(t)}_{i}=\mathbb{E}_{q^{(t)}(\bbeta)}[\eta_i(\bbeta)]$, $i=1, \ldots, n$. 

\begin{algorithm}[t]
\caption{\textsc{cavi} for logistic regression.} \label{al2}
Initialize $\xi_1^{(0)}, \ldots, \xi^{(0)}_n$.\\
 \For( ){{\normalfont{$t=1$ until convergence of the evidence lower bound $\textsc{elbo}[q(\bbeta,\bz)]$}}}
 {
\enspace   {\bf Maximization}. Maximize $\textsc{elbo}[q(\bbeta)\prod_{i=1}^nq^{(t-1)}(z_i)]$ with respect to $q(\bbeta)$. As discussed in Section \ref{seccavi}, this maximization provides a Gaussian density for $q^{(t)}(\bbeta)$ having natural parameters
\begin{eqnarray*}
\blambda^{(t)}_1=\mathbb{E}_{q^{(t-1)}(\bz)}[\boeta_1(\by)]=\bX^{\intercal}(\by-0.5{\cdot}{\bf 1}_n)+\bSigma_0^{-1}\bmu_0, \quad \blambda^{(t)}_2=\mathbb{E}_{q^{(t-1)}(\bz)}[\boeta_2(\bz)]=-0.5(\bSigma_0^{-1}+\bX^{\intercal} \bar{\bZ}^{(t-1)} \bX),
\end{eqnarray*}
with $\bar{\bZ}^{(t-1)}=\mbox{diag}\{0.5[\xi_1^{(t-1)}]^{-1}\mbox{tanh}(0.5 \xi^{(t-1)}_1), \ldots,  0.5[\xi_n^{(t-1)}]^{-1}\mbox{tanh}(0.5 \xi^{(t-1)}_n)\}$. Hence, the approximating density is that of a $\mbox{N}_p(\bmu^{(t)}, \bSigma^{(t)})$ with $ \bmu^{(t)}=[-2\blambda^{(t)}_2]^{-1} \blambda^{(t)}_1$ and $\bSigma^{(t)}=[-2\blambda^{(t)}_2]^{-1}$.\\
\vspace{3pt}
\enspace
{\bf Maximization}. Maximize $\textsc{elbo}[q^{(t)}(\bbeta)\prod_{i=1}^nq(z_i)]$ with respect to $\prod_{i=1}^n q(z_i)$. As discussed in Section \ref{seccavi}, this maximization provides a P\'olya-gamma density for each $q^{(t)}(z_i)$, $i=1, \ldots, n$, having natural parameters
\begin{eqnarray*}
\phi^{(t)}_{i}=\mathbb{E}_{q^{(t)}(\bbeta)}[\eta_i(\bbeta)]=-0.5[\bx^{\intercal}_i\bSigma^{(t)}\bx_i{+}(\bx_i^{\intercal}\bmu^{(t)})^2], \quad i=1, \ldots, n.
\end{eqnarray*}
Thus, each $q^{(t)}(z_i)$ is the density of a $\textsc{pg}(1, \xi^{(t)}_i)$ with parameter $\xi^{(t)}_i=[-2\phi^{(t)}_{i}]^{\frac{1}{2}}$ for $i=1,\dots,n$. Note that $\xi_i$ and $-\xi_i$ induce the same P\'olya-gamma density. Hence, there is no ambiguity in the above square root. A similar remark, from a different perspective, is found in footnote 3 of \citet{jak_2000}.
}
{\bf Output at the end of the algorithm:} $q^*(\bbeta,\bz)=q^*(\bbeta)\prod_{i=1}^n q^*(z_i)$. 
\end{algorithm}

As shown in Algorithm \ref{al2}, the above expectations can be computed in closed--form since $q(\bbeta)$ and $q(z_1), \ldots, q(z_n)$ are already known to be Gaussian and P\'olya-gammas, thus requiring only sequential optimizations of  natural parameters. This form of \textsc{cavi}, which is discussed  in \citet{hoffman_2013} and is known in the  literature as variational Bayesian \textsc{em} \citep{beal_2003}, clarifies the link between \textsc{cavi} and the \textsc{em} in \citet{jak_2000}. Indeed, recalling Section \ref{sec2}, both methods optimize the same objective function and rely, implicitly, on the same steps. In particular, due to Lemma \ref{lemma1}, the \textsc{e}--step in Algorithm \ref{al1} is in fact maximizing the conditional $\textsc{elbo}[q(\bbeta)\prod_{i=1}^nq^{(t-1)}(z_i)]$ with respect to $q(\bbeta)$ as in the first maximization of Algorithm \ref{al2}. Similarly, the \textsc{m}--step solution for $\bxi$ in Algorithm \ref{al1} is actually the one maximizing the conditional $\textsc{elbo}[q^{(t)}(\bbeta)\prod_{i=1}^nq(z_i)]$ with respect to $\prod_{i=1}^n q(z_i)$ in the second optimization of the \textsc{cavi} in Algorithm \ref{al2}.  

\subsection{Stochastic variational inference (SVI)}
\label{svi}
Algorithm \ref{al2} and model \eqref{eq10} motivate further generalizations in large $n$ studies when \textsc{cavi} can face severe computational bottleneck. Indeed, each iteration of Algorithm \ref{al2} requires optimization of the whole local natural parameters $\phi_{i}$, $i=1, \ldots, n$ and summation over the entire dataset when updating $(\blambda_1, \blambda_2)$. This issue has been addressed by \citet{hoffman_2013} via computationally cheaper updates under a \textsc{svi} routine for scalable mean--field \textsc{vb}  in conditionally conjugate exponential family models. Leveraging the probabilistic results in Section \ref{sec2}, we adapt this strategy to Bayesian logistic regression, thus covering an important computational gap.

To clarify the core idea underlying \textsc{svi}, note that, joining equations \eqref{eq11}--\eqref{eq13} and recalling \citet[][Sect 2.2]{hoffman_2013}, the \textsc{cavi} solutions for the parameters $(\blambda_1, \blambda_2)$ at iteration $t$ are indeed solving the maximization problem $\mbox{argmax}_{\blambda_1,\blambda_2}(\mathbb{E}_{q(\bbeta)}\{\mathbb{E}_{q^{(t-1)}(\bz)}\log[ p(\bbeta\mid \by,\bz)]\}-\mathbb{E}_{q(\bbeta)}\log q(\bbeta)+\mbox{const}),$
where $p(\bbeta\mid \by,\bz)$ and $q(\bbeta)$ have the same exponential family representation with natural parameters $[\boeta_1(\by),\boeta_2(\bz)]$ and $\blambda=(\blambda_1,\blambda_2)$, respectively. Adapting \citet[][Sect 2.2]{hoffman_2013} to our case, this optimization can be solved by equating to 0 the gradient 
\vspace{-5pt}
\begin{eqnarray*}
\nabla_{\blambda}[\mathbb{E}_{q(\bbeta)}\log p(\bbeta)-\mathbb{E}_{q(\bbeta)}\log q(\bbeta)] +{\sum\nolimits_{i=1}^{n}}{\nabla_{\blambda}}\mathbb{E}_{q(\bbeta)}[\mathbb{E}_{q^{(t-1)}(z_i)} \log p(y_i,z_i \mid \bbeta)]\qquad \qquad \qquad \qquad \qquad \qquad  &&\\
{=}\nabla_{\blambda}[\mathbb{E}_{q(\bbeta)}\log p(\bbeta){-}\mathbb{E}_{q(\bbeta)}\log q(\bbeta)] {+}{\sum\nolimits_{i=1}^{n}} \{(y_i{-}0.5){\nabla_{\blambda}}\mathbb{E}_{q(\bbeta)}(\bx_i^{\intercal}\bbeta){-}0.5 \mathbb{E}_{q^{(t-1)}(z_i)}(z_i){\nabla_{\blambda}}\mathbb{E}_{q(\bbeta)}[(\bx_i^{\intercal}\bbeta)^{2}]\} \qquad  &&
\end{eqnarray*}
thus obtaining the estimating equations
\begin{eqnarray}
\mathbb{E}_{q^{(t-1)}(\bz)}[\boeta_1(\by)]-\blambda_1=0, \qquad \mathbb{E}_{q^{(t-1)}(\bz)}[\boeta_2(\bz)]-\blambda_2=0, 
\label{eq14}
\end{eqnarray}
whose solution provides the optimal parameters $\blambda^{(t)}_1$ and $\blambda^{(t)}_2$ from the \textsc{cavi}.

\begin{algorithm}[t]
\caption{\textsc{svi} for logistic regression.} \label{al3}
Initialize $(\blambda_1^{(0)},\blambda_2^{(0)})$ randomly and set the step-size sequence $\rho_t$ appropriately.\\
 \For( ){{\normalfont{$t=1$ until a large number of iterations}}}
 {
  \vspace{3pt}
 \enspace   {\bf Sampling.} Sample a data point $(y_i, \bx_i)$ randomly from the dataset.\\
 \vspace{3pt}
 \enspace
{\bf Local Maximization.} Calculate the locally optimized density for $z_i$ as a function of the latest value $\blambda^{(t-1)}$ for $\blambda$, thus obtaining a P\'olya-gamma with natural parameter
\begin{eqnarray*}
\phi_{i}(\blambda^{(t-1)})=-0.5[\bx^{\intercal}_i\bSigma^{(t-1)}\bx_i{+}(\bx_i^{\intercal}\bmu^{(t-1)})^2]=-0.5\{\bx^{\intercal}_i[-2\blambda^{(t-1)}_2]^{-1}\bx_i{+}(\bx_i^{\intercal}[-2\blambda^{(t-1)}_2]^{-1} \blambda^{(t-1)}_1)^2\}.
\end{eqnarray*}
Therefore, the locally optimized density is the one of a $\textsc{pg}[1, \xi_i(\blambda^{(t-1)})]$ with $\xi_i(\blambda^{(t-1)})=[-2\phi_{i}(\blambda^{(t-1)})]^{\frac{1}{2}}$.\\
\vspace{3pt}
\enspace   {\bf Global parameters updates}. Update the global parameters according to the \citet{robbins_1951} iterative procedure outlined in \eqref{eq17}. This provides the solutions
\begin{eqnarray*}
\blambda^{(t)}_1=(1-\rho_t)\blambda^{(t-1)}_1+\rho_t[n\bx_i(y_i-0.5)+\bSigma_0^{-1}\bmu_0], \quad \blambda^{(t)}_2=(1-\rho_t)\blambda^{(t-1)}_2-\rho_t0.5(\bSigma_0^{-1}+n\bx_i \bar{z_i}^{(t-1)} \bx^{\intercal}_i),
\end{eqnarray*}
with $\bar{z}^{(t-1)}=0.5[\xi_i(\blambda^{(t-1)})]^{-1}\mbox{tanh}[0.5 \xi_i(\blambda^{(t-1)})]$. Hence, the approximating density is that of a Gaussian random variable with expectation $ \bmu^{(t)}=[-2\blambda^{(t)}_2]^{-1} \blambda^{(t)}_1$ and variance--covariance matrix $\bSigma^{(t)}=[-2\blambda^{(t)}_2]^{-1}$.\\
}
{\bf Output at the end of the algorithm:} $q_{\textsc{svi}}^*(\bbeta)$. 
\end{algorithm}

Motivated by this alternative view of  \textsc{cavi}, \citet{hoffman_2013} proposed a scalable \textsc{svi} routine relying on stochastic optimization \citep{robbins_1951} of the \textsc{elbo} in \eqref{eq9} as a direct function of the global parameters $\blambda$. Specifically, let $q(\bbeta)$  be the Gaussian approximating distribution parameterized by $\blambda$, and $q_{\mbox{\tiny opt}}(z_1), \ldots, q_{\mbox{\tiny opt}}(z_n)$ the P\`olya-gamma densities with optimal natural parameters $\phi_1(\blambda)=\mathbb{E}_{q(\bbeta)}[\eta_1(\bbeta)], \ldots, \phi_n(\blambda)=\mathbb{E}_{q(\bbeta)}[\eta_n(\bbeta)]$, then optimizing the locally maximized \textsc{elbo}
\begin{eqnarray}
 \mathcal{L}(\blambda)&=&\int_{\Re^p}\int_{\Re_+^n}q(\bbeta) \prod_{i=1}^{n}q_{\mbox{\tiny opt}}(z_i)\left[\log \frac{p(\bbeta) \prod_{i=1}^np(y_i\mid \bbeta)p(z_i \mid \bbeta)}{q(\bbeta) \prod_{i=1}^{n}q_{\mbox{\tiny opt}}(z_i)} \right] \mbox{\normalfont  d} \bz \mbox{\normalfont  d} \bbeta, \nonumber \\
&=&\mathbb{E}_{q(\bbeta)}\log p(\bbeta)-\mathbb{E}_{q(\bbeta)}\log q(\bbeta)+\sum\nolimits_{i=1}^{n}\mathbb{E}_{q(\bbeta)}(\mathbb{E}_{q_{\mbox{\tiny opt}}(z_i)}\{\log[p(y_i \mid \bbeta)p(z_i \mid \bbeta)]-\log q_{\mbox{\tiny opt}}(z_i)\}),
\label{eq15}
\end{eqnarray}
provides an optimal solution for the global parameters $\blambda$ and, as a direct consequence, for the locally optimized ones $\phi_1(\blambda), \ldots, \phi_n(\blambda)$. This ensures maximization of  \eqref{eq9}. Before deriving the  \textsc{svi} routine, let us first highlight a key connection between the \textsc{cavi} solutions in \eqref{eq14} and those arising from the optimization of $ \mathcal{L}(\blambda)$. To do this, note that recalling Lemma \ref{lemma1} and its proof, the functions within the summation term in \eqref{eq15} coincide with the expectations of the conditional \textsc{elbo}s in \eqref{eq7} evaluated at the optimal P\`olya-gamma densities with $\xi_i(\blambda)=[-2 \phi_{i}(\blambda)]^{1/2}$, thus providing
\begin{eqnarray*}
\begin{split}
&\mathbb{E}_{q(\bbeta)}(\mathbb{E}_{q_{\mbox{\tiny opt}}(z_i)}\{\log[p(y_i \mid \bbeta)p(z_i \mid \bbeta)]-\log q_{\mbox{\tiny opt}}(z_i)\})  \\
&=(y_i-0.5)\mathbb{E}_{q(\bbeta)}(\bx_i^{\intercal}\bbeta)-0.5\mathbb{E}_{q_{\mbox{\tiny opt}}(z_i)}(z_i)\{\mathbb{E}_{q(\bbeta)}[(\bx_i^{\intercal}\bbeta)^2]+2\phi_i(\blambda)\}-\log \mbox{cosh}\{0.5[-2\phi_i(\blambda)]^{1/2}\}+\mbox{const},  \qquad\\
&=(y_i-0.5)\mathbb{E}_{q(\bbeta)}(\bx_i^{\intercal}\bbeta)+\alpha[\phi_i(\blambda)]+\mbox{const}, \quad i=1, \ldots, n,
\end{split}
\end{eqnarray*} 
where the last equality follows after noticing that $\phi_i(\blambda)=\mathbb{E}_{q(\bbeta)}[\eta_i(\bbeta)]=\mathbb{E}_{q(\bbeta)}[-0.5(\bx^{\intercal}_i\bbeta)^2]$ and that $-\log \mbox{cosh}\{0.5[-2\phi_i(\blambda)]^{1/2}\}$ is the function $\alpha[\phi_i(\blambda)]$ in the exponential family representation for the density of the P\`olya-gamma with parameters $1$ and $[-2 \phi_{i}(\blambda)]^{1/2}$. Since our final goal is to maximize $\mathcal{L}(\blambda)$, let us substitute the above equation in \eqref{eq15} and compute $\nabla_{\blambda}\mathcal{L}(\blambda)$. This  leads to
\begin{eqnarray*}
\nabla_{\blambda}[\mathbb{E}_{q(\bbeta)}\log p(\bbeta)-\mathbb{E}_{q(\bbeta)}\log q(\bbeta)] +{\sum\nolimits_{i=1}^{n}}\{(y_i-0.5)\nabla_{\blambda}\mathbb{E}_{q(\bbeta)}(\bx_i^{\intercal}\bbeta)+\nabla_{\phi_i(\blambda)}\alpha[\phi_i(\blambda)]\nabla_{\blambda}\phi_i(\blambda)\}\quad \quad \quad &&\\
=\nabla_{\blambda}[\mathbb{E}_{q(\bbeta)}\log p(\bbeta){-}\mathbb{E}_{q(\bbeta)}\log q(\bbeta)] {+}{\sum\nolimits_{i=1}^{n}}\{(y_i{-}0.5){\nabla_{\blambda}}\mathbb{E}_{q(\bbeta)}(\bx_i^{\intercal}\bbeta){-}0.5 \mathbb{E}_{q_{\mbox{\tiny opt}}(z_i)}(z_i){\nabla_{\blambda}}\mathbb{E}_{q(\bbeta)}[(\bx_i^{\intercal}\bbeta)^{2}]\}. && 
\end{eqnarray*} 
To clarify the last equality, note again that $\phi_i(\blambda)=\mathbb{E}_{q(\bbeta)}[-0.5(\bx^{\intercal}_i\bbeta)^2]$, whereas from classical properties of exponential families we also have $\nabla_{\phi_i(\blambda)}\alpha[\phi_i(\blambda)]=\mathbb{E}_{q_{\mbox{\tiny opt}}(z_i)}(z_i)$. This specific form of $\nabla_{\blambda}\mathcal{L}(\blambda)$ provides simple optimization, directly related to \textsc{cavi}. Indeed, comparing the above gradient with the one leading to  equations \eqref{eq14}, it can be noticed that such quantities coincide after replacing each $\mathbb{E}_{q^{(t-1)}(z_i)}(z_i)$ with $\mathbb{E}_{q_{\mbox{\tiny opt}}(z_i)}(z_i)$. Hence, the maximum of $\mathcal{L}(\blambda)$ can be similarly obtained by solving equations \eqref{eq14}, where the expected value is now computed with respect to $q_{\mbox{\tiny opt}}(\bz)$ instead of $q^{(t-1)}(\bz)$.

\begin{figure}[t]
\includegraphics[width=15.3cm]{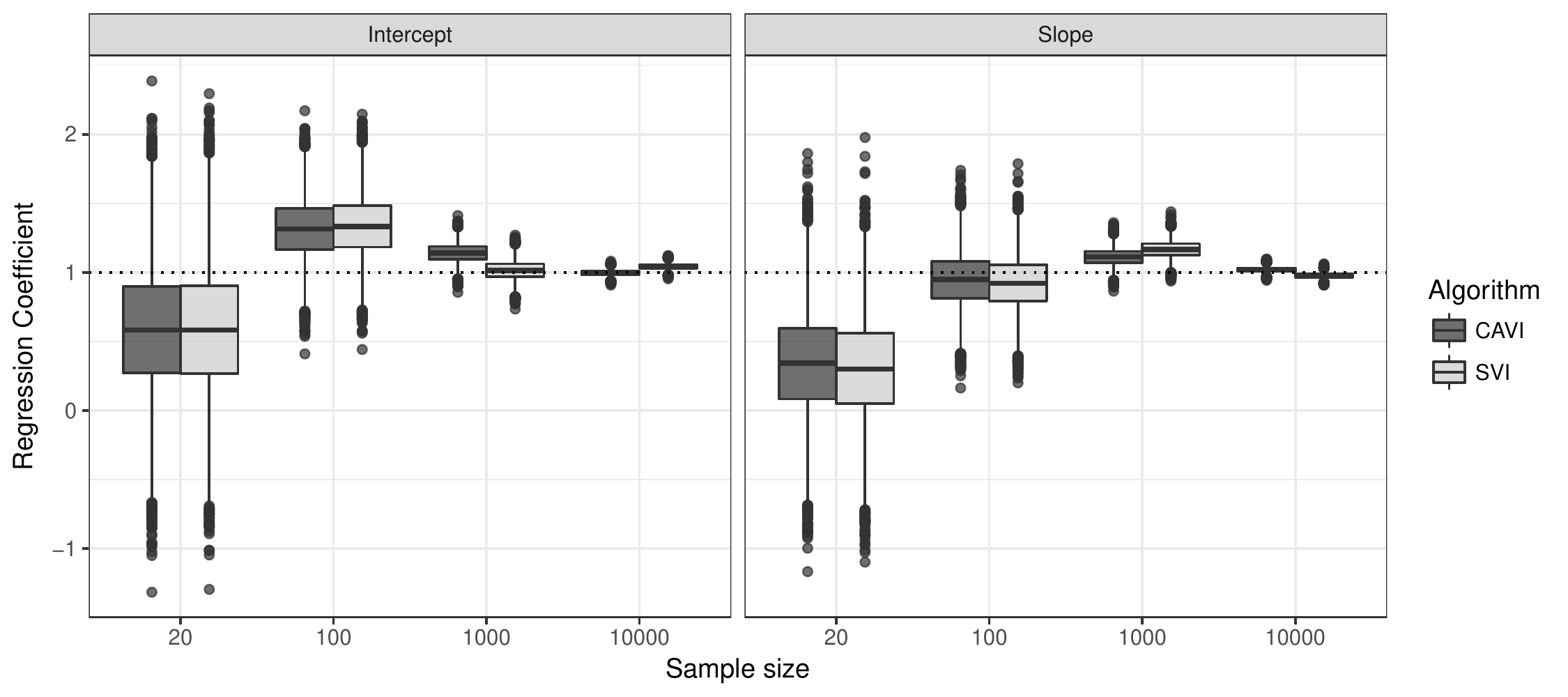}
\caption{{{For increasing $n$, boxplots of the \textsc{cavi} and \textsc{svi} solutions for the coefficients in a Bayesian logistic regression with a single covariate. The dotted horizontal line refers to the true coefficients. }}}
\label{F1}
\end{figure}

To derive the \textsc{svi} routine, let us first express $\mathbb{E}_{q_{\mbox{\tiny opt}}(\bz)}[\boeta_1(\by)]-\blambda_1{=}\ 0$ and $\mathbb{E}_{q_{\mbox{\tiny opt}}(\bz)}[\boeta_2(\bz)]-\blambda_2{=}\ 0$ as
\begin{eqnarray}
\quad\ \ \ \bSigma_0^{-1}\bmu_0+\sum\nolimits_{i=1}^{n}\bx_i(y_i-0.5)-\blambda_1=0, \quad -0.5[\bSigma_0^{-1}+\sum\nolimits_{i=1}^{n}\bx_i\mathbb{E}_{q_{\mbox{\tiny opt}}(z_i)}(z_i)\bx_i^{\intercal}]-\blambda_2=0, 
\label{eq16}
\end{eqnarray}
to highlight how the evaluation of \eqref{eq16} requires storing the entire dataset and summing over the whole units. This step could be a major computational bottleneck when the sample size $n$ is massive, thus motivating optimization  of $\mathcal{L}(\blambda)$   \citep{hoffman_2013} via stochastic approximation of  \eqref{eq16} \citep{robbins_1951}. This is done by constructing a random version of $\{\bSigma_0^{-1}\bmu_0+\sum\nolimits_{i=1}^{n}\bx_i(y_i-0.5)-\blambda_1,-0.5[\bSigma_0^{-1}+\sum\nolimits_{i=1}^{n}\bx_i\mathbb{E}_{q_{\mbox{\tiny opt}}(z_i)}(z_i)\bx_i^{\intercal}]-\blambda_2\}$ whose expectation coincides with these functions, but its realizations are cheaper to compute. A simple solution is to consider the discrete random variable $B(\blambda)$ assuming values  $\{B_i(\blambda_1)=\bSigma_0^{-1}\bmu_0+n\bx_i(y_i-0.5)-\blambda_1,B_i(\blambda_2)=-0.5[\bSigma_0^{-1}+n\bx_i\mathbb{E}_{q_{\mbox{\tiny opt}}(z_i)}(z_i)\bx_i^{\intercal}]-\blambda_2\}$, $i=1, \ldots, n$ with equal probability $n^{-1}$, thus implicitly relying on a mechanism which samples a unit $i$ uniformly and then computes \eqref{eq16} as if such unit was observed $n$ times. This allows application of \citet{robbins_1951} to solve \eqref{eq16} via the iterative updates
\begin{eqnarray}
\blambda_1^{(t)}=\blambda_1^{(t-1)}+\rho_t B_t(\blambda_1^{(t-1)}), \quad \blambda_2^{(t)}=\blambda_2^{(t-1)}+\rho_t B_t(\blambda_2^{(t-1)}), \qquad \mbox{for every iteration $t$},
\label{eq17}
\end{eqnarray}
where $[B_t(\blambda_1^{(t-1)}),B_t(\blambda_2^{(t-1)})]$ is an independent sample of $B(\blambda)$, evaluated at $(\blambda_1^{(t-1)},\blambda_2^{(t-1)})$, whereas $\rho_t$ are step-sizes ensuring convergence to the solution of \eqref{eq16}---and hence to the maximum of $\mathcal{L}(\blambda)$---when $\sum_t \rho_t=+\infty$ and $\sum_t \rho^2_t<+\infty$ \citep{robbins_1951,spall_2005}. \citet{hoffman_2013} set $\rho_t=(t+\tau)^{-\kappa}$, with $\kappa \in (0.5,1]$ denoting the forgetting rate, and $\tau \geq 0$ the delay down-weighting early iterations. These settings ensure the convergence conditions on $\rho_t$. Algorithm \ref{al3} provides the pseudo--code to perform \textsc{svi} in logistic regression under model \eqref{eq10}. As it can be noticed, this routine relies on updating steps which are cheaper to compute than those of \textsc{cavi}. In fact, each iteration of Algorithm  \ref{al3} does not require to sum over the entire dataset, but relies instead on a single observation sampled uniformly. These gains are fundamental to scale--up calculations in massive datasets.

Figure \ref{F1} provides a summarizing quantitative assessment for the performance of \textsc{cavi} and \textsc{svi} in a logistic regression with $\mbox{logit}[\mbox{pr}(y_i=1 \mid \bbeta)]=\beta_1+\beta_2x_i$ for each $i=1, \ldots, n$. To study performance under different dimensions, we generate data for increasing sample size $n \in (20,100,1000,10000)$ from a logistic regression with true coefficients $\beta^{(0)}_1{=} \ \beta^{(0)}_2{=} \ 1$ and covariates $x_1, \ldots, x_n$ from a $\textsc{unif}(-2,2)$. We perform Bayesian inference under a moderately diffuse prior $\bbeta\sim \mbox{N}_2({\bf 0}, 10{\cdot} {\bf I}_2)$ and approximate the posterior via \textsc{cavi} and \textsc{svi}, with $(\tau,\kappa)=(1,0.75)$. As is clear from Figure \ref{F1}, although \textsc{svi} relies on noisy gradients, the final approximations $q_{\textsc{svi}}^*(\beta_1)$, $q_{\textsc{svi}}^*(\beta_2)$ are similar to the optimal solutions $q^*(\beta_1)$, $q^*(\beta_2)$ from  \textsc{cavi}. These approximate posteriors increasingly shrink around the true coefficients as $n$ grows, thus suggesting desirable asymptotic behavior of the \textsc{cavi} and \textsc{svi} solutions. Code and tutorials to reproduce this analysis are available at \url{https://github.com/tommasorigon/logisticVB}.

\section{Discussion}
\label{sec4}
Motivated by the success of the lower bound developed by \citet{jak_2000} for logistic log-likelihoods, and by the lack of formal justifications for its excellent performance, we introduced a novel connection between their construction and a P\'olya-gamma data augmentation developed in recent years for logistic regression \citep{pol_2013}. Besides providing a probabilistic interpretation of the bound derived by \citet{jak_2000}, this connection crucially places the variational methods associated with the proposed lower bound in a more general framework having desirable properties. More specifically, the \textsc{em}  for variational inference proposed by \citet{jak_2000} maximizes a well--defined \textsc{elbo} associated with a conditionally conjugate exponential family model and, hence, provides the same approximation of the \textsc{cavi} for \textsc{vb} in this model.

The above result motivates further generalizations to novel computational methods, including the \textsc{svi} algorithm in Section \ref{svi}. On a similar line of research, an interesting direction is to incorporate the method of \citet{Giordano2015} to correct the variance--covariance matrix in $q^*(\bbeta)$  from Algorithms \ref{al1}--\ref{al3}, which is known to underestimate  variability. Besides this, the results in Figure \ref{F1} motivate also future theoretical studies on the quality of \textsc{cavi} and \textsc{svi} approximations in asymptotic settings. This can be done by adapting available theory on mean--field \textsc{vb}  for conditionally conjugate exponential family models \citep[e.g.][]{wang2004}. Finally, we shall also emphasize that although our focus is on classical Bayesian logistic regression, the results in Sections \ref{sec2}--\ref{sec3} can be easily generalized to more complex learning procedures incorporating logistic models as a building--block, as long as such formulations admit conditionally conjugate exponential family representations.

\section*{Appendix A. Maximum likelihood estimation}
\label{appendix}
Although maximum likelihood estimation of the parameters in logistic regression is well--established, there is still active research within this class of models to address other important open questions. For instance, classical Newton--Raphson does not guarantee monotone log-likelihood sequences, thus potentially affecting  the stability of the maximization routine \citep{boh_1988}.  This has motivated other methods leveraging alternative quadratic approximations which uniformly minorize the logistic log-likelihood and are tangent to it \citep{boh_1988,de_2009,brow_2015}, thus guaranteeing  monotone convergence \citep{hun_2004}. As discussed in Sections \ref{sec1}--\ref{sec2}, this is the case of the bound \eqref{eq5} in \citet{jak_2000}.

Motivated by this result, \citet{jak_2000} provided in Appendix C of their article an iterative routine for maximum likelihood estimation of $\bbeta$ that has monotone log-likelihood sequences and simple maximizations. In particular, letting $\bbeta^{(t-1)}$ denote the estimate of the coefficients at step $t-1$ and simplifying the calculations in Appendix C of  \citet{jak_2000}, their routine first maximizes \eqref{eq5} with respect to $\xi_1, \ldots, \xi_n$, obtaining $\hat{\xi}^{(t-1)}_i=\bx_i^{\T}\bbeta^{(t-1)}$, $i=1,\ldots, n$, and then derive $\bbeta^{(t)}$ by maximizing $\log {\bar{p}}^{(t-1)}(\by\mid \bbeta)=\sum_{i=1}^n\log \bar{p}^{(t-1)}(y_i\mid \bbeta)$, with $\xi_i$ replaced by $\hat{\xi}^{(t-1)}_i=\bx_i^{\T}\bbeta^{(t-1)}$. This last optimization is straightforward to compute due to the quadratic form of \eqref{eq5}, thus providing 
\begin{eqnarray}
\bbeta^{(t)}=(\bX^{\T} \hat{\bZ}^{(t-1)} \bX)^{-1}\bX^{\T}(y_1-0.5, \ldots, y_n-0.5)^{\T}=(\bX^{\T} \hat{\bZ}^{(t-1)} \bX)^{-1}\bX^{\T}(\by-0.5{\cdot}\mbox{\bf 1}_n),
\label{eq3append}
\end{eqnarray}
with $\hat{\bZ}^{(t-1)}$ a diagonal matrix having entries $\hat{\bZ}^{(t-1)}_{[ii]}=0.5(\bx_i^{\T}\bbeta^{(t-1)})^{-1}\mbox{tanh}[0.5(\bx_i^{\T}\bbeta^{(t-1)})]$. Analyzing this strategy in the light of \eqref{eq8}, it can be noticed that $\hat{\xi}^{(t-1)}_i=\bx_i^{\T}\bbeta^{(t-1)}$ leads to the solution $\hat{q}^{(t-1)}(z_i)$ minimizing the \textsc{kl} divergence $\textsc{kl}[q(z_i)  \mid\mid  p(z_i \mid \bbeta^{(t-1)})]$ in \eqref{eq8}, for $i=1, \ldots, n$, whereas the function 
\begin{eqnarray*}
\log {\bar{p}}^{(t-1)}(\by\mid \bbeta)=\sum_{i=1}^n\log  {\bar{p}}^{(t-1)}(y_i\mid \bbeta)=\sum_{i=1}^n\int_{\Re_{+}}\hat{q}^{(t-1)}(z_i) \log \frac{p(y_i\mid \bbeta)p(z_i\mid \bbeta)}{\hat{q}^{(t-1)}(z_i)}\mbox{d}z_i,
\end{eqnarray*}
maximized with respect to $\bbeta$ is equal, up to an additive constant, to the expectation $Q(\bbeta \mid \bbeta^{(t-1)})$ of the complete log-likelihood $\log p(\by,\bz \mid \bbeta)=\sum_{i=1}^n\log[p(y_i\mid \bbeta)p(z_i \mid \bbeta)]$ computed with respect to the conditional distribution  of the P\'olya-gamma data $(z_i \mid \bbeta^{(t-1)}) \sim \textsc{pg}(1,\bx_i^{\T}\bbeta^{(t-1)})$, for $i=1, \ldots, n$. Combining these results with the key rationale underlying  \textsc{em}  \citep[e.g.][Ch. 9.4]{bishop_2006}, it  follows that the routine in Appendix C of  \citet{jak_2000}   is an \textsc{em} based on P\'olya-gamma augmented data. This algorithm first computes the expectation $Q(\bbeta \mid \bbeta^{(t-1)})=\sum_{i=1}^n\mathbb{E}_{\hat{q}^{(t-1)}(z_i)}\log[p(y_i\mid \bbeta)p(z_i \mid \bbeta)]$ and then maximizes it with respect to $\bbeta$. See also  \citet{scott_2013}.

As already discussed in \citet{jak_2000}, the above maximization method guarantees a monotone log-likelihood sequence, thus ensuring a stable convergence. Indeed, leveraging \eqref{eq5} and  \eqref{eq8}, it can be noticed that the above routine is a genuine minorize-majorize (\textsc{mm}) algorithm \citep[e.g.][]{hun_2004}, provided that $\sum_{i=1}^n\log p(y_i \mid \bbeta) \geq \sum_{i=1}^n\log  {\bar{p}}^{(t-1)}(y_i\mid \bbeta)$ for each $\bbeta$, and  that $\sum_{i=1}^n\log p(y_i \mid \bbeta^{(t-1)}) = \sum_{i=1}^n\log  {\bar{p}}^{(t-1)}(y_i\mid \bbeta^{(t-1)})$. We shall notice that also \citet{de_2009} and \citet{brow_2015}  highlighted this relation with \textsc{mm} under a mathematical argument, while discussing the sharpness of  \eqref{eq5}. Exploiting results in Section~\ref{sec2}, we also show that the \textsc{mm} algorithm relying on \eqref{eq5} improves the convergence rate of the one in \citet{boh_1988}. To our knowledge, this is the only tractable \textsc{mm} alternative to \citet{jak_2000}.

To address the above goal, let us first re--write \eqref{eq3append} to allow a direct comparison with the solution 
\begin{eqnarray}
\bbeta^{(t)}=\bbeta^{(t-1)}+(\bX^{\T}\bGamma \bX)^{-1}\bX^{\T}(y_1-\pi_1^{(t-1)}, \ldots, y_n-\pi_n^{(t-1)})^{\T},
\label{eq4append}
\end{eqnarray}
from \citet{boh_1988}, with $\bGamma=0.25{\cdot}\mbox{\bf I}_n$ and $\pi_i^{(t-1)}=[1+\exp (-{\bx}_i^{\T} {\bbeta}^{(t-1)})]^{-1}$. Indeed, by adding and  subtracting $\hat{\bZ}^{(t-1)}\bX\bbeta^{(t-1)}$ inside $(\by-0.5{\cdot}\mbox{\bf 1}_n)$, equation \eqref{eq3append} coincides with 
\begin{eqnarray}
\bbeta^{(t)}=\bbeta^{(t-1)}+(\bX^{\T} \hat{\bZ}^{(t-1)} \bX)^{-1}\bX^{\T}(y_1-\pi_1^{(t-1)}, \ldots, y_n-\pi_n^{(t-1)})^{\T}, 
\label{eq5append}
\end{eqnarray}
after noticing that each element $y_i- 0.5-0.5\mbox{tanh}[0.5({\bx}_i^{\T}{\bbeta}^{(t-1)})]$ in  $\by-0.5{\cdot}\mbox{\bf 1}_n-\hat{\bZ}^{(t-1)}\bX\bbeta^{(t-1)}$ can be alternatively expressed as $y_i-0.5\{1+[1-\exp (-{\bx}_i^{\T} {\bbeta}^{(t-1)})][1+\exp (-{\bx}_i^{\T}{\bbeta}^{(t-1)})]^{-1}\}=y_i-\pi_i^{(t-1)}$. A closer inspection of  \eqref{eq4append} and \eqref{eq5append} shows that the updating underlying \citet{boh_1988} and \citet{jak_2000} coincides with the one from the Newton--Raphson, after replacing the Hessian $\bH^{(t-1)}=-\bX^{\T}\bLambda^{(t-1)}\bX=-\bX^{\T}\mbox{diag}[\pi^{(t-1)}_1(1-\pi^{(t-1)}_1), \ldots, \pi^{(t-1)}_n(1-\pi^{(t-1)}_n)]\bX$---evaluated in $\bbeta^{(t-1)}$---of the logistic log-likelihood, with $-\bX^{\T}\bGamma \bX$ in \eqref{eq4append}  and $-\bX^{\T} \hat{\bZ}^{(t-1)} \bX$ in  \eqref{eq5append}. Recalling \citet{boh_1988}, both matrices define a lower bound for the Hessian and guarantee that both \eqref{eq4append} and \eqref{eq5append} induce a monotone  sequence for the log-likelihood. In \citet{boh_1988}  the uniform bound follows after noticing that $\bpi^{(t-1)}(1-\bpi^{(t-1)})\leq 0.25{\cdot}\mbox{\bf 1}_n$ for any $\bpi^{(t-1)} \in (0,1)^n$, whereas, according to Lemma \ref{lemma1}, the adaptive bound induced by  \citet{jak_2000} is formally related to an exact data augmentation, thus suggesting that \eqref{eq5append} may provide  more efficient updates than  \eqref{eq4append}. This claim is formalized in Proposition~\ref{theorem2} by comparing the convergence rates of the two algorithms. Refer to \citet[][Chapter 3.9]{mcla_2007} for details regarding the definition and the computation of the convergence rate associated with a generic iterative routine.

\begin{proposition}
\label{theorem2}
Assume that $\bbeta^*$ is the limit, if it exists, of $\{\bbeta^{(t)}: t \geq 1\}$, and let $\mbox{M}^{\textsc{b}}(\cdot)=\{\mbox{M}_1^{\textsc{b}}(\cdot), \ldots, \mbox{M}_p^{\textsc{b}}(\cdot) \}$ and $\mbox{M}^{\textsc{j}}(\cdot)=\{\mbox{M}_1^{\textsc{j}}(\cdot), \ldots, \mbox{M}_p^{\textsc{j}}(\cdot) \}$ denote the functions mapping from $\bbeta^{(t-1)}$ to $\bbeta^{(t)}$ in \eqref{eq4append} and \eqref{eq5append}, respectively. Then $r_{\textsc{b}}\geq r_{\textsc{j}}$, with $r_{\textsc{b}}= ||  \bJ_{\textsc{b}}^*  ||_2$ and $r_{\textsc{j}}= ||  \bJ_{\textsc{j}}^*  ||_2$ being the maximum eigenvalues of the Jacobians $\bJ_{\textsc{b}}=\partial \mbox{M}^{\textsc{b}}(\bbeta)/\partial \bbeta$ and $\bJ_{\textsc{j}}=\partial \mbox{M}^{\textsc{j}}(\bbeta)/\partial \bbeta$, respectively, computed in $\bbeta^*$.
\end{proposition}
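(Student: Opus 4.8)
The plan is to place both \eqref{eq4append} and \eqref{eq5append} in the common minorize--majorize form $\mbox{M}(\bbeta)=\bbeta+\bB(\bbeta)^{-1}\nabla\ell(\bbeta)$, where $\ell(\bbeta)=\sum_{i=1}^{n}\log p(y_i\mid\bbeta)$ is the logistic log-likelihood with gradient $\nabla\ell(\bbeta)=\bX^{\intercal}(\by-\bpi(\bbeta))$ and Hessian $\bH(\bbeta)=-\bX^{\intercal}\bLambda(\bbeta)\bX$, and where the surrogate curvature is the constant $\bB_{\textsc{b}}=\bX^{\intercal}\bGamma\bX$ for \eqref{eq4append} and the $\bbeta$--dependent $\bB_{\textsc{j}}(\bbeta)=\bX^{\intercal}\hat{\bZ}(\bbeta)\bX$ for \eqref{eq5append}. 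First I would differentiate $\mbox{M}(\cdot)$ and evaluate at the fixed point $\bbeta^*$, where $\nabla\ell(\bbeta^*)={\bf 0}$. By the product rule, every term carrying a derivative of $\bB(\bbeta)^{-1}$ is multiplied by $\nabla\ell(\bbeta^*)={\bf 0}$ and hence vanishes; this is precisely what removes the $\bbeta$--dependence of $\bB_{\textsc{j}}$ from the Jacobian, so that both maps share the form
\begin{equation*}
\bJ={\bf I}_p+\bB(\bbeta^*)^{-1}\bH(\bbeta^*)={\bf I}_p-\bB(\bbeta^*)^{-1}\bm{A},\qquad \bm{A}:=\bX^{\intercal}\bLambda^*\bX,
\end{equation*}
with $\bLambda^*=\mbox{diag}[\pi_1^*(1-\pi_1^*),\ldots,\pi_n^*(1-\pi_n^*)]$ evaluated at $\bbeta^*$.

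The second step is to establish, in the L\"owner (positive semidefinite) ordering, the sandwich inequality
\begin{equation*}
\bm{A}=\bX^{\intercal}\bLambda^*\bX\ \preceq\ \bB_{\textsc{j}}(\bbeta^*)=\bX^{\intercal}\hat{\bZ}^*\bX\ \preceq\ \bB_{\textsc{b}}=\bX^{\intercal}\bGamma\bX .
\end{equation*}
Writing $\psi_i=\bx_i^{\intercal}\bbeta^*$, a direct hyperbolic simplification delivers two pointwise facts on the diagonal entries. On one side $\hat{\bZ}^*_{[ii]}=0.5\psi_i^{-1}\tanh(0.5\psi_i)\leq 0.25=\bGamma_{[ii]}$, since $\xi\mapsto 0.5\xi^{-1}\tanh(0.5\xi)$ is even and maximised at $0$ with value $0.25$. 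On the other side, using $\pi_i^*(1-\pi_i^*)=[4\cosh^2(0.5\psi_i)]^{-1}$ together with $\sinh(\psi)/\psi\geq 1$, one gets the identity $\hat{\bZ}^*_{[ii]}=\pi_i^*(1-\pi_i^*)\,\sinh(\psi_i)/\psi_i\geq\pi_i^*(1-\pi_i^*)$. Pre-- and post--multiplying both diagonal inequalities by $\bX^{\intercal}$ and $\bX$ yields the displayed ordering; its first half is exactly the statement, invoked in the text preceding the proposition, that both surrogates minorize the Hessian.

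The final step converts this ordering into the claimed comparison of rates. Assuming $\bX$ of full column rank, $\bm{A}$ and each $\bB$ are symmetric positive definite with $\bm{A}\preceq\bB$, so $\bB^{-1}\bm{A}$ is similar to the symmetric matrix $\bB^{-1/2}\bm{A}\,\bB^{-1/2}$ and therefore has real eigenvalues lying in $(0,1]$; consequently every eigenvalue of $\bJ={\bf I}_p-\bB^{-1}\bm{A}$ lies in $[0,1)$ and the convergence rate is $r=1-\mu_{\min}$, where $\mu_{\min}=\min_{v\neq{\bf 0}}(v^{\intercal}\bm{A}v)/(v^{\intercal}\bB v)$ is the smallest generalised eigenvalue of $(\bm{A},\bB)$. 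Since $\bB_{\textsc{j}}(\bbeta^*)\preceq\bB_{\textsc{b}}$ gives $v^{\intercal}\bB_{\textsc{j}}(\bbeta^*)v\leq v^{\intercal}\bB_{\textsc{b}}v$ for every $v$, the Rayleigh quotient with denominator $\bB_{\textsc{j}}(\bbeta^*)$ dominates that with denominator $\bB_{\textsc{b}}$ pointwise, whence $\mu_{\min}^{\textsc{j}}\geq\mu_{\min}^{\textsc{b}}$ and therefore $r_{\textsc{j}}=1-\mu_{\min}^{\textsc{j}}\leq 1-\mu_{\min}^{\textsc{b}}=r_{\textsc{b}}$, as claimed. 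I expect this last reduction to be the main obstacle: one must check that the non-symmetric $\bJ$ still has real spectrum, so that its spectral radius equals $1-\mu_{\min}$ rather than some larger singular value, and then track inequality directions carefully, since a L\"owner ordering on the denominators $\bB$ reverses into an ordering on $\mu_{\min}$ and reverses once more into the ordering on $r=1-\mu_{\min}$.
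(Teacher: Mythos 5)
Your proof is correct, and while it shares the paper's overall skeleton (reduce both Jacobians at the fixed point to ${\bf I}_p-\bB^{-1}\bm{A}$ with $\bm{A}=\bX^{\T}\bLambda^*\bX$, compare the diagonal curvature matrices via a hyperbolic inequality, then compare eigenvalues), it executes the two substantive steps by genuinely different means. To obtain $\bJ_{\textsc{j}}^*$ you differentiate $\mbox{M}^{\textsc{j}}$ directly and observe that every term carrying a derivative of $\bB_{\textsc{j}}(\bbeta)^{-1}$ is annihilated because $\nabla\ell(\bbeta^*)={\bf 0}$ at the fixed point; the paper instead declares direct differentiation impractical and invokes the \textsc{em} rate--matrix formula $\bJ_{\textsc{j}}^*={\bf I}_p+\mathcal{I}_c(\bbeta^*)^{-1}\bH^*$, which is available only through the P\'olya-gamma interpretation of Section~2. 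Your route is more elementary and bypasses that machinery entirely (it only needs $\bX$ of full column rank so that a fixed point of the map is a stationary point of $\ell$). More importantly, your final step is a genuine tightening of the paper's: the paper passes from $\bGamma-\hat{\bZ}^*\succeq 0$ to an ordering of the non--symmetric Jacobians and then to $r_{\textsc{b}}\geq r_{\textsc{j}}$ by citation, which is delicate since ${\bf I}_p-\bB_{\textsc{b}}^{-1}\bm{A}$ and ${\bf I}_p-\bB_{\textsc{j}}^{-1}\bm{A}$ are not simultaneously symmetrizable. Your argument---similarity of $\bB^{-1}\bm{A}$ to $\bB^{-1/2}\bm{A}\bB^{-1/2}$, real spectrum in $(0,1]$ guaranteed by the extra inequality $\bm{A}\preceq\bB_{\textsc{j}}(\bbeta^*)$ (established through the identity $\hat{\bZ}^*_{[ii]}=\pi_i^*(1-\pi_i^*)\sinh(\psi_i)/\psi_i$, which does not appear in the paper), and monotonicity of the generalized Rayleigh quotient in its denominator---makes the eigenvalue comparison airtight and additionally shows $r\in[0,1)$, i.e.\ local convergence of both schemes. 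The one point to flag is terminological: you identify $r$ with the spectral radius $1-\mu_{\min}$ rather than with the operator $2$-norm of the non--symmetric $\bJ$; this matches the paper's stated meaning (``the maximum eigenvalue'') and is the quantity that actually governs the convergence rate, but it is worth saying explicitly that the real, nonnegative spectrum is what licenses this identification.
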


To prove Proposition~\ref{theorem2}, note that $\bJ_{\textsc{b}}^*={\bf I}_p+(\bX^{\T}\bGamma \bX)^{-1} \bH^{*}$ can be easily computed as in  \citet{boh_1988}, since $\bGamma$ does not depend on $\bbeta$ in \eqref{eq4append}. It is instead not immediate to calculate $\bJ_{\textsc{j}}^*$ via direct differentiation of $\mbox{M}^{\textsc{j}}(\bbeta)$, because \eqref{eq5append} contains more complex hyperbolic transformations of $\bbeta$. However, exploiting the probabilistic findings in Section~\ref{sec2}, this issue can be easily circumvented  leveraging the \textsc{em} interpretation of the routine in \citet{jak_2000} via P\'olya-gamma augmented data. Indeed, following  \citet[][Ch. 3.9.3]{mcla_2007},  the rate matrix of an iterative routine based on \textsc{em} methods, is  equal to $\bJ_{\textsc{j}}^*=\mbox{\bf I}_p+\mathcal{ I}_c(\bbeta^*)^{-1}  \bH^{*}$, with $\mathcal{ I}_c(\bbeta^*)$ denoting the expectation, taken with respect to the  augmented data, of the complete-data information matrix $\mbox{I}_c(\bbeta^*)$. This quantity can be easily computed in our case, provided that the complete log-likelihood  is equal, up to an additive constant, to the quadratic function $\sum_{i=1}^n\{(y_i-0.5)\bx_i^{\T}\bbeta-0{.5}z_i(\bx_i^{\T}\bbeta)^2\}$ of $\bbeta$, which is also linear in the augmented  P\'olya-gamma data $z_i$. Due to this, it is easy to show that $\mathcal{ I}_c(\bbeta^*)=\bX^{\T}\hat{\bZ}^*\bX$, where $\hat{\bZ}^*$ is the $n \times n$ diagonal matrix with entries $\hat{\bZ}_{[ii]}^*=0.5(\bx_i^{\T}\bbeta^{*})^{-1}\mbox{tanh}\{0{.}5(\bx_i^{\T}\bbeta^{*})\}$.

\begin{proof}[Proof]
Recalling the above discussion, the proof of Proposition~\ref{theorem2} requires comparing the maximum eigenvalues of $\bJ_{\textsc{b}}^*=\mbox{\bf I}_p+(\bX^{\T}\bGamma \bX)^{-1} \bH^{*}=\mbox{\bf I}_p-(\bX^{\T}\bGamma \bX)^{-1}(\bX^{\T}\bLambda^{*}\bX)$ and $\bJ_{\textsc{j}}^*=\mbox{\bf I}_p+(\bX^{\T}\hat{\bZ}^*\bX)^{-1} \bH^{*}=\mbox{\bf I}_p-(\bX^{\T}\hat{\bZ}^* \bX)^{-1}(\bX^{\T}\bLambda^{*}\bX)$. To address this goal, let us first show that $\bJ_{\textsc{b}}^* \geq \bJ_{\textsc{j}}^*$. This result can be proved by noticing that $\bJ_{\textsc{b}}^*$ and $\bJ_{\textsc{j}}^*$ only differ by the positive diagonal matrices $\bGamma$ and $\hat{\bZ}^*$, respectively. Hence, the above inequality is met if all the entries $ \bGamma_{[ii]} - \hat{\bZ}^*_{[ii]}=0.25-0.5(\bx_i^{\T}\bbeta^{*})^{-1} \mbox{tanh}\{0.5(\bx_i^{\T}\bbeta^{*})\}=0.25[1-2(\bx_i^{\T}\bbeta^{*})^{-1}\mbox{tanh}\{0{.}5(\bx_i^{\T}\bbeta^{*})\}]$ in the diagonal matrix $\bGamma-\hat{\bZ}^*$ are non-negative.  Letting $u=0.5(\bx_i^{\T}\bbeta^{*})$, and re--writing the well--established inequality $u^{-1}\mbox{sinh}(u) \leq \mbox{cosh}(u)$ \citep[e.g.][]{zhu_2012} as $u^{-1}\mbox{sinh}(u)\mbox{cosh}^{-1}(u)\leq 1$, it follows that $u^{-1}\mbox{tanh}(u)=2(\bx_i^{\T}\bbeta^{*})^{-1} \mbox{tanh}\{0.5(\bx_i^{\T}\bbeta^{*})\}\leq 1$, thus guaranteeing $\bGamma-\hat{\bZ}^* \geq 0$, and, as a direct consequence, $\bJ_{\textsc{b}}^* \geq \bJ_{\textsc{j}}^*$. This concludes the proof. In fact, $\bJ_{\textsc{b}}^* \geq \bJ_{\textsc{j}}^*$ implies $ ||  \bJ_{\textsc{b}}^*  ||_2=r_{\textsc{b}} \geq r_{\textsc{j}}= ||  \bJ_{\textsc{j}}^*  ||_2$ \citep[e.g.][]{knut_2001}.
\end{proof}

Proposition~\ref{theorem2} ensures that \eqref{eq5append} improves the convergence rate of  \eqref{eq4append}. In fact, higher values of $r$ imply slower convergence. We shall however emphasize that the \textsc{em} in Appendix C of  \citet{jak_2000} does not reach the quadratic convergence of Newton--Raphson, but guarantees monotone log-likelihood sequences. It is also important to highlight that although the \textsc{mm} in  \citet{boh_1988} has slower convergence, the matrix $\bX^{\T}\bGamma \bX$ in \eqref{eq4append} does not depend on $\bbeta^{(t-1)}$, thus requiring inversion only once during the iterative procedure. This reduces computational complexity, especially in high--dimensional problems, compared to the updating in  \eqref{eq5append}, which requires, instead, inversion of $\bX^{\T} \hat{\bZ}^{(t-1)} \bX$ at each iteration. We refer to the tutorial \texttt{em\_logistic\_tutorial.md} in \url{https://github.com/tommasorigon/logisticVB} for illustrative simulations providing a quantitative comparison among the aforementioned methods.

Although the above focus has been on maximum likelihood estimation methods, the probabilistic interpretation \eqref{eq7} of the quadratic bound in  \citet{jak_2000} motivates simple adaptations to include the maximum a posteriori estimation problem under a Bayesian framework. This routine has been carefully studied by  \citet{scott_2013} and we refer to their contribution for details.

\bibliographystyle{imsart-nameyear}
\bibliography{paper-ref}

\begin{thebibliography}{36}

\bibitem[\protect\citeauthoryear{Airoldi et~al.}{2008}]{airoldi_2008}
\begin{barticle}[author]
\bauthor{\bsnm{Airoldi},~\bfnm{Edoardo~M}\binits{E.~M.}},
  \bauthor{\bsnm{Blei},~\bfnm{David~M}\binits{D.~M.}},
  \bauthor{\bsnm{Fienberg},~\bfnm{Stephen~E}\binits{S.~E.}} \AND
  \bauthor{\bsnm{Xing},~\bfnm{Eric~P}\binits{E.~P.}}
(\byear{2008}).
\btitle{Mixed membership stochastic blockmodels}.
\bjournal{J. Mach. Learn. Res.}
\bvolume{9}
\bpages{1981--2014}.
\end{barticle}
\endbibitem

\bibitem[\protect\citeauthoryear{Beal and Ghahramani}{2003}]{beal_2003}
\begin{barticle}[author]
\bauthor{\bsnm{Beal},~\bfnm{M.~J.}\binits{M.~J.}} \AND
  \bauthor{\bsnm{Ghahramani},~\bfnm{Zoubin}\binits{Z.}}
(\byear{2003}).
\btitle{The variational Bayesian EM algorithm for incomplete data: with
  application to scoring graphical model structures}.
\bjournal{Bayesian Stat.}
\bvolume{7}
\bpages{453--464}.
\end{barticle}
\endbibitem

\bibitem[\protect\citeauthoryear{Bishop}{2006}]{bishop_2006}
\begin{bbook}[author]
\bauthor{\bsnm{Bishop},~\bfnm{Christopher~M}\binits{C.~M.}}
(\byear{2006}).
\btitle{Pattern Recognition and Machine Learning}.
\bpublisher{Springer}.
\end{bbook}
\endbibitem

\bibitem[\protect\citeauthoryear{Bishop and Svens{\'{e}}n}{2003}]{Bishop2003}
\begin{barticle}[author]
\bauthor{\bsnm{Bishop},~\bfnm{Christopher~M}\binits{C.~M.}} \AND
  \bauthor{\bsnm{Svens{\'{e}}n},~\bfnm{Markus}\binits{M.}}
(\byear{2003}).
\btitle{Bayesian hierarchical mixtures of experts}.
\bjournal{Proceedings of the Conference on Uncertainty in Artificial
  Intelligence}
\bpages{57--64}.
\end{barticle}
\endbibitem

\bibitem[\protect\citeauthoryear{Blei, Kucukelbir and
  McAuliffe}{2017}]{blei_2017}
\begin{barticle}[author]
\bauthor{\bsnm{Blei},~\bfnm{David~M}\binits{D.~M.}},
  \bauthor{\bsnm{Kucukelbir},~\bfnm{Alp}\binits{A.}} \AND
  \bauthor{\bsnm{McAuliffe},~\bfnm{Jon~D}\binits{J.~D.}}
(\byear{2017}).
\btitle{Variational inference: A review for statisticians}.
\bjournal{J. Amer. Statist. Assoc.}
\bvolume{112}
\bpages{859--877}.
\end{barticle}
\endbibitem

\bibitem[\protect\citeauthoryear{Blei, Ng and Jordan}{2003}]{blei_2003}
\begin{barticle}[author]
\bauthor{\bsnm{Blei},~\bfnm{David~M}\binits{D.~M.}},
  \bauthor{\bsnm{Ng},~\bfnm{Andrew~Y}\binits{A.~Y.}} \AND
  \bauthor{\bsnm{Jordan},~\bfnm{Michael~I}\binits{M.~I.}}
(\byear{2003}).
\btitle{Latent dirichlet allocation}.
\bjournal{J. Mach. Learn. Res.}
\bvolume{3}
\bpages{993--1022}.
\end{barticle}
\endbibitem

\bibitem[\protect\citeauthoryear{B{\"o}hning and Lindsay}{1988}]{boh_1988}
\begin{barticle}[author]
\bauthor{\bsnm{B{\"o}hning},~\bfnm{Dankmar}\binits{D.}} \AND
  \bauthor{\bsnm{Lindsay},~\bfnm{Bruce~G}\binits{B.~G.}}
(\byear{1988}).
\btitle{Monotonicity of quadratic-approximation algorithms}.
\bjournal{Ann. I. Stat. Math.}
\bvolume{40}
\bpages{641--663}.
\end{barticle}
\endbibitem

\bibitem[\protect\citeauthoryear{Braun and McAuliffe}{2010}]{braun_2010}
\begin{barticle}[author]
\bauthor{\bsnm{Braun},~\bfnm{Michael}\binits{M.}} \AND
  \bauthor{\bsnm{McAuliffe},~\bfnm{Jon}\binits{J.}}
(\byear{2010}).
\btitle{Variational inference for large-scale models of discrete choice}.
\bjournal{J. Amer. Statist. Assoc.}
\bvolume{105}
\bpages{324--335}.
\end{barticle}
\endbibitem

\bibitem[\protect\citeauthoryear{Browne and McNicholas}{2015}]{brow_2015}
\begin{barticle}[author]
\bauthor{\bsnm{Browne},~\bfnm{Ryan~P}\binits{R.~P.}} \AND
  \bauthor{\bsnm{McNicholas},~\bfnm{Paul~D}\binits{P.~D.}}
(\byear{2015}).
\btitle{Multivariate sharp quadratic bounds via $\Sigma$-strong convexity and
  the Fenchel connection}.
\bjournal{Electron. J. Stat.}
\bvolume{9}
\bpages{1913--1938}.
\end{barticle}
\endbibitem

\bibitem[\protect\citeauthoryear{Carbonetto and
  Stephens}{2012}]{Carbonetto2012}
\begin{barticle}[author]
\bauthor{\bsnm{Carbonetto},~\bfnm{Peter}\binits{P.}} \AND
  \bauthor{\bsnm{Stephens},~\bfnm{Matthew}\binits{M.}}
(\byear{2012}).
\btitle{Scalable variational inference for Bayesian variable selection in
  regression, and its accuracy in genetic association studies}.
\bjournal{Bayesian Anal.}
\bvolume{7}
\bpages{73--108}.
\end{barticle}
\endbibitem

\bibitem[\protect\citeauthoryear{Choi and Hobert}{2013}]{choi_2013}
\begin{barticle}[author]
\bauthor{\bsnm{Choi},~\bfnm{Hee~Min}\binits{H.~M.}} \AND
  \bauthor{\bsnm{Hobert},~\bfnm{James~P}\binits{J.~P.}}
(\byear{2013}).
\btitle{The Polya-gamma Gibbs sampler for Bayesian logistic regression is
  uniformly ergodic}.
\bjournal{Electron. J. Stat.}
\bvolume{7}
\bpages{2054--2064}.
\end{barticle}
\endbibitem

\bibitem[\protect\citeauthoryear{De~Leeuw and Lange}{2009}]{de_2009}
\begin{barticle}[author]
\bauthor{\bsnm{De~Leeuw},~\bfnm{Jan}\binits{J.}} \AND
  \bauthor{\bsnm{Lange},~\bfnm{Kenneth}\binits{K.}}
(\byear{2009}).
\btitle{Sharp quadratic majorization in one dimension}.
\bjournal{Comput. Statist. Data Anal.}
\bvolume{53}
\bpages{2471--2484}.
\end{barticle}
\endbibitem

\bibitem[\protect\citeauthoryear{Dempster, Laird and Rubin}{1977}]{demp_1977}
\begin{barticle}[author]
\bauthor{\bsnm{Dempster},~\bfnm{Arthur~P}\binits{A.~P.}},
  \bauthor{\bsnm{Laird},~\bfnm{Nan~M}\binits{N.~M.}} \AND
  \bauthor{\bsnm{Rubin},~\bfnm{Donald~B}\binits{D.~B.}}
(\byear{1977}).
\btitle{Maximum likelihood from incomplete data via the EM algorithm}.
\bjournal{J. R. Stat. Soc. Ser. B. Stat. Methodol.}
\bvolume{39}
\bpages{1--38}.
\end{barticle}
\endbibitem

\bibitem[\protect\citeauthoryear{Gelfand and Smith}{1990}]{gelfand_1990}
\begin{barticle}[author]
\bauthor{\bsnm{Gelfand},~\bfnm{Alan~E}\binits{A.~E.}} \AND
  \bauthor{\bsnm{Smith},~\bfnm{Adrian~FM}\binits{A.~F.}}
(\byear{1990}).
\btitle{Sampling-based approaches to calculating marginal densities}.
\bjournal{J. Amer. Statist. Assoc.}
\bvolume{85}
\bpages{398--409}.
\end{barticle}
\endbibitem

\bibitem[\protect\citeauthoryear{Giordano, Broderick and
  Jordan}{2015}]{Giordano2015}
\begin{barticle}[author]
\bauthor{\bsnm{Giordano},~\bfnm{Ryan~J}\binits{R.~J.}},
  \bauthor{\bsnm{Broderick},~\bfnm{Tamara}\binits{T.}} \AND
  \bauthor{\bsnm{Jordan},~\bfnm{Michael~I}\binits{M.~I.}}
(\byear{2015}).
\btitle{Linear response methods for accurate covariance estimates from mean
  field variational Bayes}.
\bjournal{Advances in Neural Information Processing Systems}
\bpages{1441--1449}.
\end{barticle}
\endbibitem

\bibitem[\protect\citeauthoryear{Hoffman et~al.}{2013}]{hoffman_2013}
\begin{barticle}[author]
\bauthor{\bsnm{Hoffman},~\bfnm{Matthew~D}\binits{M.~D.}},
  \bauthor{\bsnm{Blei},~\bfnm{David~M}\binits{D.~M.}},
  \bauthor{\bsnm{Wang},~\bfnm{Chong}\binits{C.}} \AND
  \bauthor{\bsnm{Paisley},~\bfnm{John}\binits{J.}}
(\byear{2013}).
\btitle{Stochastic variational inference}.
\bjournal{J. Mach. Learn. Res.}
\bvolume{14}
\bpages{1303--1347}.
\end{barticle}
\endbibitem

\bibitem[\protect\citeauthoryear{Hunter and Lange}{2004}]{hun_2004}
\begin{barticle}[author]
\bauthor{\bsnm{Hunter},~\bfnm{David~R}\binits{D.~R.}} \AND
  \bauthor{\bsnm{Lange},~\bfnm{Kenneth}\binits{K.}}
(\byear{2004}).
\btitle{A tutorial on {{MM}} algorithms}.
\bjournal{Am. Stat.}
\bvolume{58}
\bpages{30--37}.
\end{barticle}
\endbibitem

\bibitem[\protect\citeauthoryear{Jaakkola and Jordan}{2000}]{jak_2000}
\begin{barticle}[author]
\bauthor{\bsnm{Jaakkola},~\bfnm{Tommi~S}\binits{T.~S.}} \AND
  \bauthor{\bsnm{Jordan},~\bfnm{Michael~I}\binits{M.~I.}}
(\byear{2000}).
\btitle{Bayesian parameter estimation via variational methods}.
\bjournal{Stat. Comput.}
\bvolume{10}
\bpages{25--37}.
\end{barticle}
\endbibitem

\bibitem[\protect\citeauthoryear{Jordan et~al.}{1999}]{jordan_1999}
\begin{barticle}[author]
\bauthor{\bsnm{Jordan},~\bfnm{Michael~I}\binits{M.~I.}},
  \bauthor{\bsnm{Ghahramani},~\bfnm{Zoubin}\binits{Z.}},
  \bauthor{\bsnm{Jaakkola},~\bfnm{Tommi~S}\binits{T.~S.}} \AND
  \bauthor{\bsnm{Saul},~\bfnm{Lawrence~K}\binits{L.~K.}}
(\byear{1999}).
\btitle{An introduction to variational methods for graphical models}.
\bjournal{Mach. Learn.}
\bvolume{37}
\bpages{183--233}.
\end{barticle}
\endbibitem

\bibitem[\protect\citeauthoryear{Knutson and Tao}{2001}]{knut_2001}
\begin{barticle}[author]
\bauthor{\bsnm{Knutson},~\bfnm{Allen}\binits{A.}} \AND
  \bauthor{\bsnm{Tao},~\bfnm{Terence}\binits{T.}}
(\byear{2001}).
\btitle{Honeycombs and sums of Hermitian matrices}.
\bjournal{Not. Am. Math. Soc.}
\bvolume{48}
\bpages{175--186}.
\end{barticle}
\endbibitem

\bibitem[\protect\citeauthoryear{Kullback and Leibler}{1951}]{kullback_1951}
\begin{barticle}[author]
\bauthor{\bsnm{Kullback},~\bfnm{Solomon}\binits{S.}} \AND
  \bauthor{\bsnm{Leibler},~\bfnm{Richard~A}\binits{R.~A.}}
(\byear{1951}).
\btitle{On information and sufficiency}.
\bjournal{Ann. Math. Stat.}
\bvolume{22}
\bpages{79--86}.
\end{barticle}
\endbibitem

\bibitem[\protect\citeauthoryear{Lee, Huang and Hu}{2010}]{Lee2010}
\begin{barticle}[author]
\bauthor{\bsnm{Lee},~\bfnm{Seokho}\binits{S.}},
  \bauthor{\bsnm{Huang},~\bfnm{Jianhua~Z.}\binits{J.~Z.}} \AND
  \bauthor{\bsnm{Hu},~\bfnm{Jianhua}\binits{J.}}
(\byear{2010}).
\btitle{Sparse logistic principal components analysis for binary data}.
\bjournal{Ann. Appl. Stat.}
\bvolume{4}
\bpages{1579--1601}.
\end{barticle}
\endbibitem

\bibitem[\protect\citeauthoryear{McLachlan and Krishnan}{2007}]{mcla_2007}
\begin{bbook}[author]
\bauthor{\bsnm{McLachlan},~\bfnm{G.}\binits{G.}} \AND
  \bauthor{\bsnm{Krishnan},~\bfnm{T.}\binits{T.}}
(\byear{2007}).
\btitle{The EM Algorithm and Extensions}.
\bpublisher{Wiley}.
\end{bbook}
\endbibitem

\bibitem[\protect\citeauthoryear{Ormerod and Wand}{2010}]{ormerod2010}
\begin{barticle}[author]
\bauthor{\bsnm{Ormerod},~\bfnm{John~T}\binits{J.~T.}} \AND
  \bauthor{\bsnm{Wand},~\bfnm{Matt~P}\binits{M.~P.}}
(\byear{2010}).
\btitle{Explaining variational approximations}.
\bjournal{Am. Stat.}
\bvolume{64}
\bpages{140--153}.
\end{barticle}
\endbibitem

\bibitem[\protect\citeauthoryear{Polson, Scott and Windle}{2013}]{pol_2013}
\begin{barticle}[author]
\bauthor{\bsnm{Polson},~\bfnm{Nicholas~G}\binits{N.~G.}},
  \bauthor{\bsnm{Scott},~\bfnm{James~G}\binits{J.~G.}} \AND
  \bauthor{\bsnm{Windle},~\bfnm{Jesse}\binits{J.}}
(\byear{2013}).
\btitle{Bayesian inference for logistic models using {{P}}{\'o}lya--{{G}}amma
  latent variables}.
\bjournal{J. Amer. Statist. Assoc.}
\bvolume{108}
\bpages{1339--1349}.
\end{barticle}
\endbibitem

\bibitem[\protect\citeauthoryear{Rasmussen and Williams}{2006}]{rasm_2006}
\begin{bbook}[author]
\bauthor{\bsnm{Rasmussen},~\bfnm{C~E}\binits{C.~E.}} \AND
  \bauthor{\bsnm{Williams},~\bfnm{C~K}\binits{C.~K.}}
(\byear{2006}).
\btitle{Gaussian Processes for Machine Learning}.
\bpublisher{MIT Press}.
\end{bbook}
\endbibitem

\bibitem[\protect\citeauthoryear{Ren et~al.}{2011}]{Ren2011a}
\begin{barticle}[author]
\bauthor{\bsnm{Ren},~\bfnm{Lu}\binits{L.}},
  \bauthor{\bsnm{Du},~\bfnm{Lan}\binits{L.}},
  \bauthor{\bsnm{Carin},~\bfnm{Lawrence}\binits{L.}} \AND
  \bauthor{\bsnm{Dunson},~\bfnm{David~B.}\binits{D.~B.}}
(\byear{2011}).
\btitle{Logistic stick-breaking process}.
\bjournal{J. Mach. Learn. Res.}
\bvolume{12}
\bpages{203--239}.
\end{barticle}
\endbibitem

\bibitem[\protect\citeauthoryear{Robbins and Monro}{1951}]{robbins_1951}
\begin{barticle}[author]
\bauthor{\bsnm{Robbins},~\bfnm{Herbert}\binits{H.}} \AND
  \bauthor{\bsnm{Monro},~\bfnm{Sutton}\binits{S.}}
(\byear{1951}).
\btitle{A stochastic approximation method}.
\bjournal{Ann. Math. Stat.}
\bvolume{22}
\bpages{400--407}.
\end{barticle}
\endbibitem

\bibitem[\protect\citeauthoryear{Scott and Sun}{2013}]{scott_2013}
\begin{barticle}[author]
\bauthor{\bsnm{Scott},~\bfnm{James~G}\binits{J.~G.}} \AND
  \bauthor{\bsnm{Sun},~\bfnm{Liang}\binits{L.}}
(\byear{2013}).
\btitle{Expectation-maximization for logistic regression}.
\bjournal{arXiv:1306.0040}.
\end{barticle}
\endbibitem

\bibitem[\protect\citeauthoryear{Spall}{2005}]{spall_2005}
\begin{bbook}[author]
\bauthor{\bsnm{Spall},~\bfnm{James~C}\binits{J.~C.}}
(\byear{2005}).
\btitle{Introduction to Stochastic Search and Optimization: Estimation,
  Simulation, and Control}.
\bpublisher{John Wiley \& Sons}.
\end{bbook}
\endbibitem

\bibitem[\protect\citeauthoryear{Tang, Browne and McNicholas}{2015}]{Tang2015}
\begin{barticle}[author]
\bauthor{\bsnm{Tang},~\bfnm{Yang}\binits{Y.}},
  \bauthor{\bsnm{Browne},~\bfnm{Ryan~P.}\binits{R.~P.}} \AND
  \bauthor{\bsnm{McNicholas},~\bfnm{Paul~D.}\binits{P.~D.}}
(\byear{2015}).
\btitle{Model based clustering of high-dimensional binary data}.
\bjournal{Comput. Statist. Data Anal.}
\bvolume{87}
\bpages{84--101}.
\end{barticle}
\endbibitem

\bibitem[\protect\citeauthoryear{Wand}{2017}]{Wand2017}
\begin{barticle}[author]
\bauthor{\bsnm{Wand},~\bfnm{Matt~P}\binits{M.~P.}}
(\byear{2017}).
\btitle{Fast approximate inference for arbitrarily large semiparametric
  regression models via message passing}.
\bjournal{J. Amer. Statist. Assoc.}
\bvolume{112}
\bpages{137--168}.
\end{barticle}
\endbibitem

\bibitem[\protect\citeauthoryear{Wand et~al.}{2011}]{wand_2011}
\begin{barticle}[author]
\bauthor{\bsnm{Wand},~\bfnm{Matthew~P}\binits{M.~P.}},
  \bauthor{\bsnm{Ormerod},~\bfnm{John~T}\binits{J.~T.}},
  \bauthor{\bsnm{Padoan},~\bfnm{Simone~A}\binits{S.~A.}} \AND
  \bauthor{\bsnm{Fr{\"u}hwirth},~\bfnm{Rudolf}\binits{R.}}
(\byear{2011}).
\btitle{Mean field variational Bayes for elaborate distributions}.
\bjournal{Bayesian Anal.}
\bvolume{6}
\bpages{847--900}.
\end{barticle}
\endbibitem

\bibitem[\protect\citeauthoryear{Wang and Blei}{2013}]{wang_2013}
\begin{barticle}[author]
\bauthor{\bsnm{Wang},~\bfnm{Chong}\binits{C.}} \AND
  \bauthor{\bsnm{Blei},~\bfnm{David~M}\binits{D.~M.}}
(\byear{2013}).
\btitle{Variational inference in nonconjugate models}.
\bjournal{J. Mach. Learn. Res.}
\bvolume{14}
\bpages{1005--1031}.
\end{barticle}
\endbibitem

\bibitem[\protect\citeauthoryear{Wang and Titterington}{2004}]{wang2004}
\begin{barticle}[author]
\bauthor{\bsnm{Wang},~\bfnm{Bo}\binits{B.}} \AND
  \bauthor{\bsnm{Titterington},~\bfnm{D~Michael}\binits{D.~M.}}
(\byear{2004}).
\btitle{Convergence and asymptotic normality of variational Bayesian
  approximations for exponential family models with missing values}.
\bjournal{Proceedings of the Conference on Uncertainty in Artificial
  Intelligence}
\bpages{577--584}.
\end{barticle}
\endbibitem

\bibitem[\protect\citeauthoryear{Zhu}{2012}]{zhu_2012}
\begin{barticle}[author]
\bauthor{\bsnm{Zhu},~\bfnm{Ling}\binits{L.}}
(\byear{2012}).
\btitle{New inequalities for hyperbolic functions and their applications}.
\bjournal{J. Inequal. Appl.}
\bvolume{303}
\bpages{1--29}.
\end{barticle}
\endbibitem

\end{thebibliography}

\end{document}